\documentclass[lettersize,journal]{IEEEtran}
\usepackage{amsmath,amsfonts}
\usepackage{algorithm}
\usepackage{array}
\usepackage[caption=false,font=normalsize,labelfont=sf,textfont=sf]{subfig}
\usepackage{textcomp}
\usepackage{stfloats}
\usepackage{url}
\usepackage{verbatim}
\usepackage{graphicx}
\usepackage{cite}
\usepackage{amssymb,amsthm}
\usepackage{stfloats,color}
\usepackage{wasysym,epsfig}
\usepackage{enumerate,algpseudocode,balance,bm,nicefrac}
\usepackage{booktabs} 
\usepackage{multirow}
\usepackage{epstopdf}
\hyphenation{op-tical net-works semi-conduc-tor IEEE-Xplore}

\allowdisplaybreaks

\newtheorem{lemma}{Lemma}
\newtheorem{thm}{\bf Theorem}
\newtheorem{assumption}{\bf Assumption}

\newtheorem{corollary}{Corollary}

\newtheorem{problem}{Problem}
\theoremstyle{remark}
\theoremstyle{remark}\newtheorem{remark}{Remark}

\begin{document}


\title{\Large \bf Data-driven Polytopic Output Synchronization of Heterogeneous\\ Multi-agent Systems from Noisy Data}

\author{Yifei Li, Wenjie Liu, Jian Sun,~\IEEEmembership{Senior Member,~IEEE}, Gang Wang,~\IEEEmembership{Senior Member,~IEEE},\\
	Lihua Xie,~\IEEEmembership{Fellow,~IEEE}, and
	Jie Chen,~\IEEEmembership{Fellow,~IEEE}
\thanks{The work was supported in part by the National Key R\&D Program of China under Grant 2022ZD0119302, and the National Natural Science Foundation of China under Grants 62173034, 61925303, 62088101, U20B2073. 
	\emph{(Corresponding author: Gang Wang.)}
}
	\thanks{Yifei Li, Wenjie Liu, Jian Sun, and Gang Wang are with the National Key Lab of Autonomous Intelligent Unmanned Systems, Beijing Institute of Technology, Beijing 100081, China, and also with the Beijing Institute of Technology Chongqing Innovation Center, Chongqing 401120, China (e-mail: liyifei@bit.edu.cn; liuwenjie@bit.edu.cn; sunjian@bit.edu.cn; gangwang@bit.edu.cn).
		
Lihua Xie is with the Centre for Advanced Robotics Technology  Innovation (CARTIN), School of Electrical and Electronic Engineering, Nanyang Technological University, Singapore (e-mail:
elhxie@ntu.edu.sg).
	
	Jie Chen is with the Department of Control Science and Engineering, Tongji University, Shanghai 201804, China, and also with the National Key Lab of Autonomous Intelligent Unmanned Systems, Beijing Institute of Technology, Beijing 100081, China 	
	(e-mail: chenjie@bit.edu.cn).
}
}



\allowdisplaybreaks

\maketitle

\begin{abstract}

This paper proposes a novel approach to addressing the output synchronization problem in unknown heterogeneous multi-agent systems (MASs) using noisy data. Unlike existing studies that focus on noiseless data, we introduce a distributed data-driven controller that enables all heterogeneous followers to synchronize with a leader's trajectory. To handle the noise in the state-input-output data, we develop a data-based polytopic representation for the MAS. We tackle the issue of infeasibility in the set of output regulator equations caused by the noise by seeking approximate solutions via constrained fitting error minimization. This method utilizes measured data and a noise-matrix polytope to ensure near-optimal output synchronization. Stability conditions in the form of data-dependent semidefinite programs are derived, providing stabilizing controller gains for each follower. The proposed distributed data-driven control protocol achieves near-optimal output synchronization by ensuring the convergence of the tracking error to a bounded polytope, with the polytope size positively correlated with the noise bound. Numerical tests validate the practical merits of the proposed data-driven design and theory. 

\end{abstract}

\begin{IEEEkeywords}
Data-driven control, heterogeneous MAS, output synchronization, noisy data, polytope
\end{IEEEkeywords}

\section{Introduction}

The field of distributed control, particularly consensus control of multi-agent systems (MASs), has garnered significant attention in recent decades. Numerous research efforts have focused on achieving state consensus in homogeneous MASs, where all agents share identical dynamics, as evidenced in e.g., \cite{Olfati2007, Ren2007, Li2013,cheng2023,you2013consensus,zhou2022lyapunov
} and their associated references. However, in real-world scenarios, the presence of inevitable variations between agents and system uncertainties stemming from physical characteristics introduce heterogeneity in the dynamics of MASs. Consequently, there is a growing need to investigate the problem of output synchronization in heterogeneous MASs, which consist of agents with different dimensions and dynamics and find extensive applications.

A well-documented approach for addressing the output synchronization problem is based on the internal model principle \cite{Wieland2011}. This principle has been widely applied to design protocols for output synchronization in various settings, including heterogeneous linear MASs \cite{Su2012, Hong2013,li2016distributed}, nonlinear MASs \cite{Isidori2014}, and their generalizations with additional considerations \cite{Jiang2023,Liu2018,Gao2016}.

However, these protocols rely on knowledge of the system dynamics model of each agent, rendering them inapplicable when first-principle models are unavailable or system identification is time-consuming or inaccurate. To overcome this challenge, several results have explored model-free approaches based on reinforcement learning (RL) \cite{Kiumarsi2017,Jiang2023rl,Chen2023}. Nevertheless, model-free RL-based approaches often require a large amount of training data, demanding considerable computing resources. Alternatively, the fundamental lemma introduced by \emph{Willems et al.} in \cite{Willems2005} provides an alternative avenue for designing controllers for unknown systems by using pre-collected data, such as input, output, and/or state data. This line of research has gained significant attention due to its advantages in terms of theoretical certification and computational tractability compared to other data-driven approaches \cite{Hou2013,vanwaarde2023informativity,Markovskydata}. Moreover, a data-based representation for linear time-invariant systems has been proposed in \cite{Persis2020}, which enables the design of stabilizing controllers using data-dependent linear matrix inequalities.

The fundamental lemma has lately found applications in diverse areas of data-driven control design and analysis, including data-driven model predictive control \cite{Coulson2019, Berberich2021, Liu2022data}, 
data-enabled policy optimization \cite{Zhao2023dataenabled},
robust control \cite{Waarde2022,Rueda2021data},
event-triggered control \cite{Wang2023jas,Qi2022,Liu2022self,Wang2022self}, 
 and distributed control of network systems \cite{Baggio2021, Jiao2021, li2022ddconsensus}, among others.

The problem of data-driven control design and analysis for output synchronization in unknown heterogeneous MASs remains unexplored. While some progress has been made, such as the work presented in \cite{Jiao2021} where the process noise during offline data collection was assumed to be measurable and perfectly known, the general case of having unknown noise has not been addressed. This limitation arises from the fact that process noise cannot be accurately measured in practical scenarios. Consequently, there is a pressing need to revisit the data-driven output synchronization problem in heterogeneous MASs, considering the presence of unknown noise. Moreover, accurate system identification is hindered by noisy data, necessitating the development of robust data-based methods for controller design that can handle uncertainty. The fundamental premise of such methods is to impose reasonable constraints on the noise.

In the field of data-driven control, two primary approaches are commonly used to model noise in the data acquisition phase, namely zonotopic constraints and quadratic constraints. The former approach, utilizing data-driven zonotopic reachability analysis \cite{
	Alanwar2021}, has led to the development of robust data-driven predictive controllers as demonstrated in \cite{Alanwar2022, russo2022tube}. On the other hand, the latter approach, introduced as a general framework in \cite{Waarde2022} leveraging the matrix \emph{S}-lemma, has been widely employed in designing controllers from data subject to noise modeled by quadratic constraints. Notable applications include 
robust event-triggered control \cite{Wang2023jas},
	and distributed control \cite{li2022ddconsensus}. Furthermore, the feasibility of the output regulator equations, a crucial component for achieving output synchronization using the internal model principle, is compromised by noisy data. To the best of our knowledge, no data-driven methods have been reported for solving the output regulator equations from noisy data.

Motivated by these observations, this paper aims to develop data-driven polytopic controllers for output synchronization of unknown heterogeneous MASs using noisy data obtained offline. The first step involves describing the noise using polytopes, which serves as the basis for a novel data-based polytopic representation of MASs. To address the infeasibility issue of output regulator equations, we seek approximate solutions by tackling a norm minimization problem that incorporates noise using matrix polytopes. Building on the polytopic MAS representation, we derive sufficient conditions for stabilizing  feedback controllers in the form of data-dependent semidefinite programs. Further, we consider a new synchronization measure, termed $\Delta$-optimal output synchronization, which accounts for regulator equation errors in the presence of noisy data and where $\Delta$ is proportional to the size of noise. We demonstrate that the proposed data-driven control protocol achieves $\Delta$-optimal output synchronization and exhibits robustness against the noise in data collected offline. This is achieved by ensuring the convergence of tracking error to a $\Delta$-bounded polytope. Notably, the synchronization recovers exact output synchronization when the noise vanishes.


The contributions of this paper are summarized as follows:
\begin{enumerate}[c1)]
	\item We propose a data-based polytopic representation for heterogeneous MASs based on state-input-output data corrupted by bounded noise;
	\item We derive approximate solutions to the data-driven output synchronization problem by minimizing the norm of fitting error matrix polytopes; and,
	\item We establish the near-optimality and robustness of the data-driven polytopic design of distributed output synchronization controllers against noise.
\end{enumerate}

%


 \emph{Notation.} We adopt the following notation conventions throughout the paper. The set of non-negative integers (real numbers) is denoted by $\mathbb{N}$ ($\mathbb{R}$). The sets of $n$-dimensional real vectors and $n \times m$ real matrices are represented by $\mathbb{R}^n$ and $\mathbb{R}^{n \times m}$, respectively. For a vector $x \in \mathbb{R}^n$, the notation $x > 0$ indicates that each entry of $x$ is positive. The symbol $(\cdot)^\top$ denotes the transpose operation, while $\otimes$ represents the Kronecker product. The identity matrix of appropriate dimensions is denoted by $I$, and the zero matrix is denoted by $\mathbf{0}$. The Frobenius norm of a real matrix $X \in \mathbb{R}^{n \times m}$ is denoted by $\|X\|_{F}$. A symmetric matrix $P$ is said to be positive definite (positive semi-definite) if $P \succ \mathbf{0}$ ($P \succeq \mathbf{0}$).  The expression ${\rm{diag}}_{i=1}^{N}\{a_i\}$ represents a diagonal matrix with $a_1, a_2, \ldots, a_N$ as its main diagonal elements.

\section{Preliminaries and Problem Formulation}
\label{section2}
\subsection{Graph Theory}

Consider a weighted graph ${\mathcal{G}}=({\mathcal{V}}, {\mathcal{E}})$ that represents the interactions among a set of agents. The graph consists of two components: a nonempty set of nodes ${\mathcal{V}}=\{v_1,\ldots, v_N \}$ and a set of edges ${\mathcal{E}} \subseteq {\mathcal{V}}\times {\mathcal{V}}$. An element in ${\mathcal{E}}$, denoted as $(v_i, v_j)$, represents a link from node $v_j$ to node $v_i$.

The adjacency matrix $\mathcal{A}=[a_{ij}]\in \mathbb{R}^{N\times N}$ is defined such that 
$a_{ij}>0$ if $(v_j, v_i)\in \mathcal{E}$, and $a_{ij}=0$ otherwise. The in-degree of node $v_i$ in graph $\mathcal{G}$ is given by $d_i=\sum_{j=1}^{N}a_{ij}$, and can be represented by the diagonal matrix $D={\rm{diag}}_{i=1}^{N}\{d_i\}$. The Laplacian matrix  $\mathcal{L}=[l_{ij}]\in \mathbb{R}^{N\times N}$ associated with $\mathcal{G}$ is defined as $\mathcal{L}=D-\mathcal{A}$.

The neighbor set of node $v_i$ is denoted as $\mathcal{N}_i=\{j\in {\mathcal{V}}|(i,j)\in {\mathcal{E}}\}$. An extended graph is represented by $\bar{\mathcal{G}}=(\bar{\mathcal{V}}, \bar{\mathcal{E}})$, where $\bar{\mathcal{V}}=\mathcal{V}\cup v_0$, and $v_0$ corresponds to the node associated with the leader. The set $\bar{\mathcal{E}}$ includes all the arcs in ${\mathcal{E}}$ as well as the arcs between $v_0$ and ${\mathcal{E}}$.

A graph $\bar{\mathcal{G}}$ is said to contain a directed spanning tree if there exists a node, known as the root, from which every other node in $\bar{\mathcal{V}}$ can be reached through a directed path. The pinning matrix $G = {\rm{diag}}_{i=1}^{N}\{g_i\}$ describes the accessibility of the leader node $v_0$ to the remaining nodes $v_i\in \mathcal{V}$. The pinning gain $g_i > 0$ if $(v_0,v_i)\in \bar{\mathcal{E}}$, and $g_i = 0$ otherwise.

\subsection{Output Synchronization of Discrete-time MASs}
\label{Sec2b}

Consider a discrete-time heterogeneous leader-following MAS consisting of a leader indexed by $0$ and $N$ followers indexed by $1,2,\ldots,N$. The dynamics of follower $i\in\{1,2,\ldots,N\}$ is described by
\begin{equation}
	\begin{split}
		\label{mas}
		{x}_{i}(t+1)&=\bar{A}_{i} x_{i}(t)+\bar{B}_{i} u_{i}(t)\\
		y_i(t)&=\bar{C}_ix_{i}(t),\quad \forall t\in\mathbb{N}
	\end{split}
\end{equation}
where $x_i(t)\in \mathbb{R}^{n_i}$ represents the state, $u_i(t)\in \mathbb{R}^{p_i}$ denotes the control input, and $y_i(t)\in \mathbb{R}^{q}$ is the measurement output.

In this paper, the true system matrices $\bar{A}_i\in\mathbb{R}^{n_i\times n_i}$, $\bar{B}_i\in\mathbb{R}^{n_i\times p_i}$, and $\bar{C}_i\in\mathbb{R}^{q\times n_i}$ are assumed unknown. 

The dynamics of the leader is given by
\begin{equation}
	\label{leader}
	\begin{split}
		{x}_{0}(t+1)&={S} x_{0}(t)\\
		{y}_{0}(t)&={H} x_{0}(t)
	\end{split}
\end{equation}
where $x_{0}(t)\in \mathbb{R}^{n_0}$ and $y_{0}(t)\in \mathbb{R}^{q}$ represent the state and output of the leader, respectively. The leader's system matrices $S$ and $H$ are assumed real, constant, and known. The pair $(S,H)$ is assumed to be observable.

The dynamics and state dimensions are allowed to differ across agents, while the output dimensions must be identical for synchronization. The objective is to synchronize the outputs of all followers with that of the leader by implementing a distributed  feedback control protocol for the MAS described by \eqref{mas}-\eqref{leader}, such that $\lim_{t\to \infty}\|y_i(t)-y_0(t)\|=0$ holds for all $i\in\{1,2,\ldots,N\}$. To address this problem, the following assumptions are made.

\begin{assumption}[Communication topology]
	\label{graph}
	The graph $\bar{\mathcal{G}}$ contains a directed spanning tree with the leader node as the root.
\end{assumption}

\begin{assumption}[Stabilizability and detectability]
\label{ABC}
The pair $(\bar{A}_i,\bar{B}_i)$ is stabilizable, and $(\bar{C}_i,\bar{A}_i)$ is detectable for all $i\in\{1,2,\ldots,N\}$. 
\end{assumption}

\begin{assumption}[{Oscillating leader dynamics}]
\label{pole}
The leader dynamics $S$ has all its poles on the unit circle and non-repeated.
\end{assumption}

Regarding the assumptions, we have a remark.
\begin{remark}
	Assumptions~\ref{graph}--\ref{pole} are standard for achieving output synchronization in linear heterogeneous MASs and have been utilized in several existing results, such as \cite{Kiumarsi2017, Jiao2021}. It is important to note that the state of the leader is bounded and does not converge to zero under Assumption~\ref{pole}.

\end{remark}

Based on these assumptions, we consider a distributed  feedback control protocol for each follower in \eqref{mas} as follows
\begin{equation}
\label{controller}
		u_{i}(t)=K_i[x_i(t)-\Pi_i\eta_i(t)]+\Gamma_i\eta_i(t)
\end{equation}
where $K_i\in \mathbb{R}^{p_i\times n_i}$ is the feedback gain matrix to be designed, and $\Pi_i\in \mathbb{R}^{n_i\times n_0}$ and $\Gamma_i\in \mathbb{R}^{p_i\times n_0}$ are the solutions to output regulator equations given by
\begin{equation}
	\label{regulator}
	\begin{split}
	\bar{A}_i\Pi_i+\bar{B}_i\Gamma_i&=\Pi_i S\\
	\bar{C}_i\Pi_i&=H.
	\end{split}
\end{equation}

An observer $\eta_i(t)\in \mathbb{R}^{n_0}$ is employed to estimate the state of the leader, governed by the following distributed observer
\begin{equation}
	\label{observer}
	\begin{split}
		\eta_i(t&+1)=S\eta_i(t)+(1+d_i+g_i)^{-1}F\\
		&\times\Big[\sum_{j=1}^{N}a_{ij}(\eta_j(t)-\eta_i(t))+g_i(x_0(t)-\eta_i(t))\Big]
	\end{split}
\end{equation}
where $d_i$ and $g_i$ are the in-degree and pinning gain of node $i$, respectively, and
$F\in \mathbb{R}^{n_0\times n_0}$ is a gain matrix to be designed. 

Next, we define the observer's estimation error $\delta_i(t):=\eta_i(t)-x_0(t)$. It follows from \eqref{leader} and \eqref{observer} that the dynamics of $\delta_i(t)$ satisfies 
\begin{equation}
\label{delta}
\begin{split}
	\delta_i(t+1)&={S}\delta_i(t)+(1+d_i+g_i)^{-1}F\\
	&\quad\times \Big[\sum_{j=1}^{N}a_{ij}(\delta_j(t)-\delta_i(t))-g_i\delta_i(t)\Big].
\end{split}
\end{equation}
For the entire system, \eqref{delta} can be expressed in a compact form as follows
\begin{equation}
	\label{eq:esti:entire}
	\delta(t+1)=\big[I_N\otimes S- (I_N+D+G)^{-1}(\mathcal{L}+G)\otimes F\big]\delta(t)
\end{equation}
where $\delta(t)=[\delta_1^\top(t)~ \delta_2^\top(t)~\cdots~\delta_N^\top(t)]^\top$.

Next, we introduce the virtual tracking error $\xi_i(t):= x_i(t) - \Pi_i\eta_i(t)$. By substituting \eqref{mas}, \eqref{controller}, and \eqref{observer} into the definition of $\xi_i(t)$, we obtain its dynamics as follows
\begin{equation}
\label{eq:auxerr:dyn}
	\xi_i(t+1)=(\bar{A}_i+\bar{B}_iK_i)\xi_i(t)+(1+d_i+g_i)^{-1}\Pi_iFz_i(t)
\end{equation}
where $z_i(t)=\sum_{j=1}^{N}a_{ij}(\delta_i(t)-\delta_j(t))+g_i\delta_i(t)$.

When the true system matrices $(\bar{A}_i, \bar{B}_i, \bar{C}_i)$ are known, the next lemma provides necessary and sufficient conditions for achieving output synchronization, see e.g., \cite[Theorem 1]{Kiumarsi2017}.

\begin{lemma}[{\cite[Theorem 1]{Kiumarsi2017}}]
\label{lem1}
Suppose Assumptions~\ref{graph}-\ref{pole} hold. The output synchronization of the MAS \eqref{mas}-\eqref{leader} is achieved for all initial conditions under the distributed  feedback control strategy \eqref{controller}-\eqref{observer}, if and only if there exist matrices $F$ and $K_i$ such that $I_N\otimes S- (I_N+D+G)^{-1}(\mathcal{L}+G)\otimes F$ and $\bar{A}_i + \bar{B}_iK_i$ are Schur stable for all $i\in\{1,2,\ldots,N\}$.
\end{lemma}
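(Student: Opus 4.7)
The plan is to reduce output synchronization to the joint convergence of two cascaded error signals already introduced in the excerpt: the observer error $\delta_i(t)$ with dynamics \eqref{eq:esti:entire}, and the virtual tracking error $\xi_i(t)$ with dynamics \eqref{eq:auxerr:dyn}. First I rewrite the output tracking error by substituting $x_i(t) = \xi_i(t) + \Pi_i \eta_i(t)$ into $e_i(t) := y_i(t) - y_0(t) = \bar{C}_i x_i(t) - H x_0(t)$ and using the second regulator equation $\bar{C}_i\Pi_i = H$; this gives $e_i(t) = \bar{C}_i \xi_i(t) + H\delta_i(t)$. Consequently, output synchronization for every initial condition is equivalent to $\bar{C}_i \xi_i(t) + H\delta_i(t) \to 0$ for every initial condition.

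For the sufficiency direction I would exploit the cascaded triangular structure: \eqref{eq:esti:entire} is autonomous, while $\xi_i$ in \eqref{eq:auxerr:dyn} is driven by $z_i(t)$, a linear function of $\delta(t)$. Schur stability of $I_N\otimes S - (I_N+D+G)^{-1}(\mathcal{L}+G)\otimes F$ gives $\delta(t)\to 0$ exponentially, so $z_i(t)\to 0$. Then \eqref{eq:auxerr:dyn} is a Schur system driven by a vanishing input, and a standard discrete-time input-to-state-stability argument yields $\xi_i(t)\to 0$ and hence $e_i(t)\to 0$.

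For necessity, the quantifier ``for all initial conditions'' is essential. Choosing first $\eta_i(0) = x_0(0)$ for every $i$ produces $\delta(t)\equiv 0$ and $z_i(t)\equiv 0$, collapsing \eqref{eq:auxerr:dyn} to $\xi_i(t+1) = (\bar{A}_i + \bar{B}_i K_i)\xi_i(t)$; synchronization then requires $\bar{C}_i\xi_i(t)\to 0$ for every $\xi_i(0)$, which combined with detectability from Assumption~\ref{ABC} forces $\bar{A}_i + \bar{B}_i K_i$ to be Schur. Next, setting $\xi_i(0) = 0$ leaves $e_i(t) = H\delta_i(t)$, so $H\delta_i(t)\to 0$ for every $\delta_i(0)$; together with observability of $(S,H)$ this forces $\delta(t)\to 0$ for all initial conditions, and thus the matrix in \eqref{eq:esti:entire} must be Schur.

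The main obstacle is making the necessity argument airtight: in principle $e_i = \bar{C}_i \xi_i + H\delta_i$ could vanish through cancellation between the two contributions even when neither term decays on its own. Assumption~\ref{pole} is what rules this out, since $S$ having all its poles on the unit circle with $(S,H)$ observable means any nontrivial $\delta_i$ produces a persistent oscillation in $H\delta_i$ that cannot be cancelled by $\bar{C}_i \xi_i$ uniformly over all initial conditions. A PBH test on $(S,H)$ together with the initial-condition isolations described above closes this gap and assembles the equivalence.
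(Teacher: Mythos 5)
First, a point of comparison: the paper itself does not prove this statement --- Lemma~\ref{lem1} is imported verbatim from \cite[Theorem 1]{Kiumarsi2017}, so there is no in-paper argument to measure yours against; what follows assesses your sketch on its own terms. Your reduction $e_i(t)=\bar{C}_i\xi_i(t)+H\delta_i(t)$ (via $\bar{C}_i\Pi_i=H$) and the sufficiency direction are correct: \eqref{eq:esti:entire} is autonomous and Schur, hence $\delta(t)\to 0$ exponentially and $z_i(t)\to 0$; then \eqref{eq:auxerr:dyn} is a Schur system with a vanishing input, and the cascade/ISS argument gives $\xi_i(t)\to 0$ and $e_i(t)\to 0$. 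This is the standard route and is all that the remainder of the paper actually uses.

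The necessity direction, however, has concrete gaps. (i) Setting $\delta\equiv 0$ does reduce \eqref{eq:auxerr:dyn} to $\xi_i(t+1)=(\bar{A}_i+\bar{B}_iK_i)\xi_i(t)$ with $e_i=\bar{C}_i\xi_i$, but Assumption~\ref{ABC} gives detectability of $(\bar{C}_i,\bar{A}_i)$, not of $(\bar{C}_i,\bar{A}_i+\bar{B}_iK_i)$, and detectability is not invariant under state feedback: for instance $\bar{A}_i=\mathbf{0}_{2\times 2}$, $\bar{B}_i=I_2$, $\bar{C}_i=[1\;\,0]$ is detectable, yet $K_i={\rm{diag}}\{0,2\}$ makes $\bar{A}_i+\bar{B}_iK_i$ non-Schur while $\bar{C}_i\xi_i(t)\equiv 0$ along the unstable mode. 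So ``$\bar{C}_i\xi_i(t)\to0$ for every $\xi_i(0)$'' does not force Schur stability of the closed loop; you either need an extra detectability hypothesis on the closed-loop pair, or you must read the ``only if'' as an existence statement over $(F,K_i)$, in which case it follows from Assumptions~\ref{graph} and~\ref{ABC} without any dynamic argument. (ii) The claim that $\xi_i(0)=0$ ``leaves $e_i(t)=H\delta_i(t)$'' is false: by \eqref{eq:auxerr:dyn}, $\xi_i$ is driven by $z_i(t)$, so $\xi_i(t)\neq 0$ for $t\geq 1$ whenever $\delta(0)\neq 0$, and the cancellation problem you flag is not removed by that choice of initial condition. (iii) Your proposed fix via Assumption~\ref{pole} conflates two spectra: the unit-circle poles of $S$ govern $x_0(t)$, whereas $\delta(t)$ evolves under $I_N\otimes S-(I_N+D+G)^{-1}(\mathcal{L}+G)\otimes F$, whose eigenvalues bear no fixed relation to those of $S$; a non-Schur choice of $F$ need not produce a persistent oscillation visible through $H$. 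The clean resolution is to cite \cite[Theorem 1]{Kiumarsi2017} for necessity, or to state and prove only the sufficiency half.
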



\subsection{Pre-collecting Noisy Data}
\label{section2c}

In order to address the challenge of unknown system matrices for each follower, we propose a distributed data-driven approach. Specifically, we excite each follower with some control inputs, obtaining a set of data denoted by $\mathbb{D}_i=\{(x_i(T),u_i(T),y_i(T)): T\in \{0,1,\ldots,\rho\}\}$ for each follower $i$. The set $\mathbb{D}_i$ consists of state, input, and output measurements, and is obtained through an open-loop experiment on the following perturbed system
\begin{equation}
\label{masnoisy}
	\begin{split}
		x_i(T+1)&=\bar{A}_{i}x_i(T)+\bar{B}_{i}u_i(T)+w_i(T)\\
		y_i(T)&=\bar{C}_ix_i(T)+v_i(T)
	\end{split}
\end{equation}
where $w_i(T)\in\mathbb{R}^{n_i}$ and $v_i(T)\in\mathbb{R}^{q}$ represent {unknown} process and measurement noise, respectively. These noises satisfy the following assumption.

\begin{assumption}[Polytopic noise]
	\label{as:noise}
	For every $T$ and $i$, the process noise $w_i(T)$ and measurement noise $v_i(T)$ belong respectively to polytopic sets $\mathcal{P}_{w_i}$ and $\mathcal{P}_{v_i}$ given by
\begin{equation*}
	\begin{split}
	\mathcal{P}_{w_i}&=\Big\{w_i\big|w_i=\sum_{k=1}^{\gamma_{w_i}}\beta^{(k)}_{w,i}\hat{w}_{i}^{(k)},\beta^{(k)}_{w,i}\geq0, \sum_{k=1}^{\gamma_{w_i}}\beta^{(k)}_{w,i}=1 \Big\}\\
\mathcal{P}_{v_i}&=\Big\{v_i\big|v_i=\sum_{k=1}^{\gamma_{v_i}}\beta^{(k)}_{v,i}\hat{v}_{i}^{(k)},\beta^{(k)}_{v,i}\geq0, \sum_{k=1}^{\gamma_{v_i}}\beta^{(k)}_{v,i}=1 \Big\}
	\end{split}
\end{equation*}
where $\hat{w}_{i}^{(k)}$ and $\hat{v}_{i}^{(k)}$ represent the $k$-th vertex of polytopes $\mathcal{P}_{w_i}$ and $\mathcal{P}_{v_i}$, respectively, and  $\gamma_{w_i}$ and $\gamma_{v_i}$ denote the number of vertices.
\end{assumption}

To store the collected data, we define the following matrices per agent
\begin{align*}
	 	X_{i+} &:=\left[x_i(1) \; x_i(2) \; \cdots \; x_i(\rho)\right]\\
	 	X_i& :=\left[x_i(0) \; x_i(1) \; \cdots \; x_i(\rho-1)\right]\\
	U_i& :=\left[u_i(0) \; u_i(1) \; \cdots \; u_i(\rho-1)\right]\\
	 Y_i& :=\left[y_i(0) \; y_i(1) \; \cdots \; y_i(\rho-1)\right].
\end{align*}

The unknown process noise of length $\rho$ is denoted as $\{w_i(T)\}_{T=0}^{\rho-1}$. Consequently, the stacked matrix per agent, $W_i=[w_i(0) \; w_i(1) \; \cdots \; w_i(\rho-1)]$, belongs to the matrix polytope $\mathcal{M}_{W_i}$, 
 given by
\begin{equation}
	\label{M_poly}
	\mathcal{M}_{W_i}=\Big\{W_i\Big|W_i=\sum_{k=1}^{\gamma_{w_i}\rho}\beta_{W,i}^{(k)}\hat{W}_{i}^{(k)},\beta_{W,i}^{(k)}\geq0, \sum_{k=1}^{\gamma_{w_i}\rho}\beta_{W,i}^{(k)}=1 \Big\}
\end{equation} 
which results from the concatenation of multiple disturbance polytopes $\mathcal{P}_{w_i}$ as follows
\begin{equation}\label{eq:whati}
	\begin{split}
		\hat{W}_i^{(1+(k-1)\rho)}&=\big[\hat{w}_{i}^{(k)} \quad \mathbf{0}_{n_i\times (\rho-1)} \big]\\
	\hat{W}_i^{(m+(k-1)\rho)}&=\big[\mathbf{0}_{n_i\times (m-1)}\quad \hat{w}_{i}^{(k)}\quad \mathbf{0}_{n_i\times (\rho-m)} \big]	\\
	\hat{W}_i^{(\rho+(k-1)\rho)}&=\big[\mathbf{0}_{n_i\times (\rho-1)}\quad \hat{w}_{i}^{(k)} \big]
	\end{split}
\end{equation}
for each $k \in \{1,2,\ldots,\gamma_{w_i}\}$, $m \in\{2,3, \ldots,\rho-1\}$, and $i \in \{1,2,\ldots,N\}$.

Similarly, upon denoting the sequence of measurement noise $\{v_i(T)\}_{T=0}^{\rho-1}$ as $V_i=[v_i(0) \; v_i(1) \; \cdots \; v_i(\rho-1)]$, we can deduce that $V_i$ belongs to the following matrix polytope 
\begin{equation}
	\mathcal{M}_{V_i}=\Big\{V_i\Big|V_i=\sum_{k=1}^{\gamma_{v_i}\rho}\beta_{V,i}^{(k)}\hat{V}_{i}^{(k)},\beta_{V,i}^{(k)}\geq0, \sum_{k=1}^{\gamma_{v_i}\rho}\beta_{V,i}^{(k)}=1 \Big\}
\end{equation}
with the vertices $\hat{V}_{i}^{(k)}$ defined in the same way as $	\hat{W}_i^{(k)}$ in \eqref{eq:whati}.

\subsection{Problem Statement}

Having introduced the output synchronization of discrete-time heterogeneous MASs and the pre-collected data, we now formally state the problem of interest.

\begin{problem}[Data-driven output synchronization]
	\label{problem1}
	Given the state-input-output measurements $\mathbb{D}_i$ for each follower $i$, and under Assumptions \ref{graph}-\ref{pole}, the objective is to design a distributed controller of the form \eqref{controller}-\eqref{observer} such that approximate output synchronization of the heterogeneous MAS \eqref{mas}-\eqref{leader} is achieved for any initial states.
\end{problem}

The main challenge in addressing Problem \ref{problem1} lies in solving the output regulator equations in \eqref{regulator}, designing the controller gain $K_i$, and performing synchronization analysis without knowledge of the true system matrices, but using only available data. To tackle this challenge, we propose a data-driven polytopic reachability analysis technique in this paper, inspired by the zonotopic reachability analysis presented in \cite{Alanwar2021}.

\section{Distributed Data-driven Output Synchronization of Heterogeneous MASs}
\label{section3}

This section addresses the challenging problem of output synchronization in the unknown heterogeneous MAS \eqref{mas}-\eqref{leader}. Due to the presence of noisy data, achieving asymptotic output synchronization for the unknown heterogeneous MAS is impractical, unlike the model-based scenario depicted in Lemma \ref{lem1}. Instead, we propose achieving $\Delta$-optimal output synchronization by ensuring the stability of reachable error trajectories, which will be formalized in the following sections.

Our approach begins by introducing a data-based polytopic representation that characterizes an open-loop MAS using noisy data $\mathbb{D}_i$. Next, we address the output regulator equations by solving a data-dependent norm minimization problem. Subsequently, a polytopic controller is designed directly from the data. Leveraging this controller and the approximate solution to the output regulator equations, we propose a data-driven output synchronization algorithm and provide a proof of its UBB property, effectively addressing Problem \ref{problem1}. See Fig. \ref{fig:flow} for an illustration of the leader-following heterogeneous MAS architecture.

\begin{figure}[!htb]
	\centering
	\includegraphics[scale=0.38]{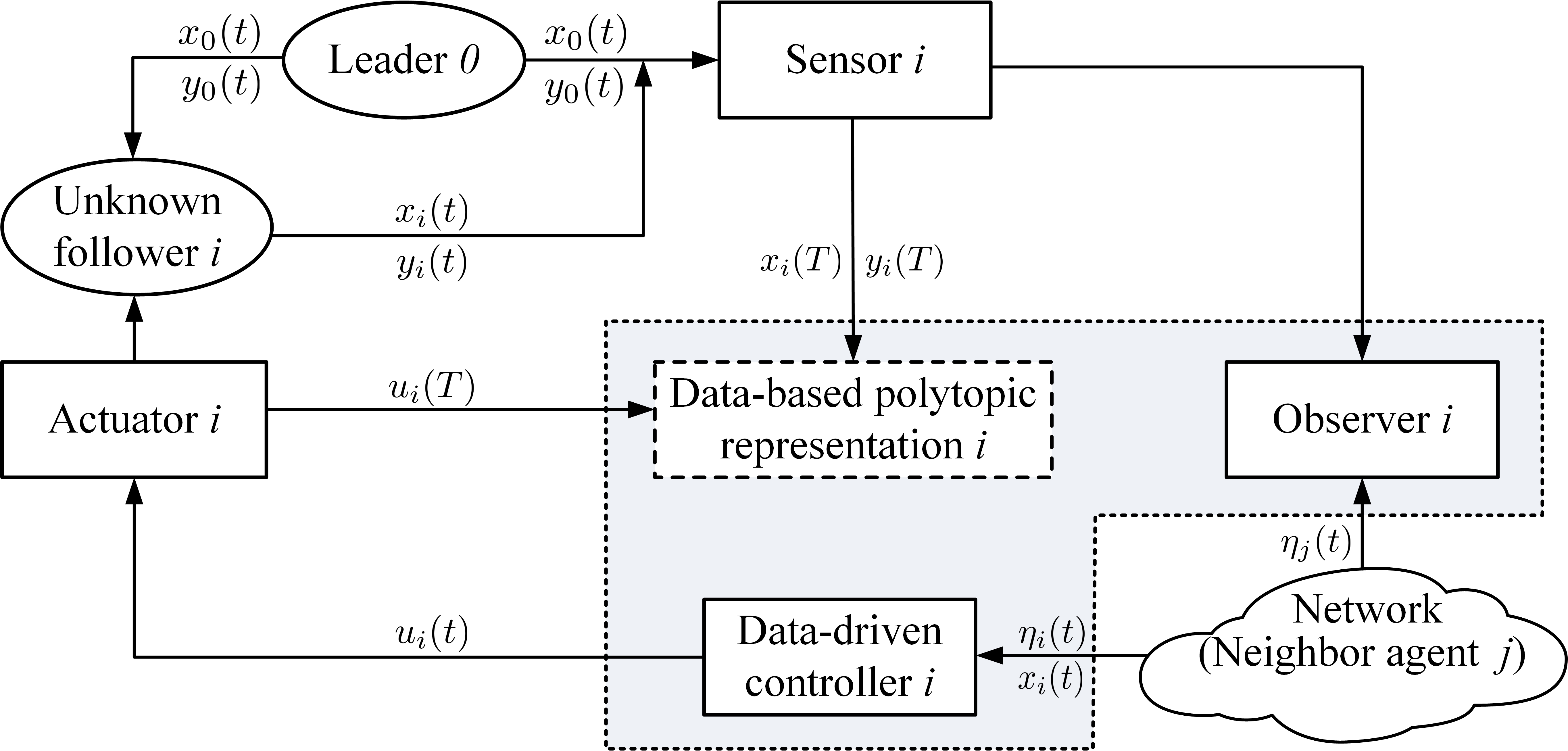}
	\caption{Distributed data-driven output synchronization.}
	\label{fig:flow}
\end{figure}

\subsection{Data-based Polytopic  Representation of MASs}
In this part, we assume the next rank condition on the richness of the data in $\mathbb{D}_i$.
\begin{assumption}
	\label{rank}
	The data matrix $\left[\begin{matrix}U_i \\X_i\end{matrix}\right]$ has full row rank for each $i\in\{1,2,\ldots,N\}$.
\end{assumption}

The verification of Assumption~\ref{rank} can be easily performed for a given dataset $\mathbb{D}_i$. Note that this rank condition can be satisfied if the measurements $(u_i(T), w_i(T))$ are persistently exciting of order $n_{i}+1$ \cite{Persis2020}. We now proceed with representing the heterogeneous MAS using data.

Since the actual realization of noise $(w_i(T), v_i(T))$ is unknown, there generally exist many systems $(A_i,B_i,C_i)$ that are consistent with the data $(X_i,X_{i+},U_i,Y_i)$. We denote this set by $\Sigma_i$ as follows:
\begin{equation}
	\begin{split}\label{eq:sigma}
		\Sigma_{i}:=\big\{(A_i&,\, B_i,\,C_i)| X_{i+}=A_iX_i+B_iU_i+W_i, \\
		&Y_i=C_iX_i+V_i, W_i\in\mathcal{M}_{W_i},V_i\in\mathcal{M}_{V_i}\big\}.
		\end{split}
\end{equation}

As mentioned in Sec.~\ref{section2c}, we assume knowledge of the polytopes $\mathcal{P}_{w_i}$ and $\mathcal{P}_{v_i}$, which bound the noise $w_i(T)$ and $v_i(T)$, along with their associated matrix polytopes $\mathcal{M}_{W_i}$ and $\mathcal{M}_{V_i}$, respectively. Our objective now is to compute a set $\mathcal{M}_i$ that provides an overapproximation of all possible $(A_i,B_i,C_i)$ consistent with the state-input-output data and given noise bounds. To achieve this, we construct a data-based polytopic representation of $(A_i,B_i,C_i)$ in the following lemma, inspired by \cite{Alanwar2022}. This representation yields a matrix polytope $\mathcal{M}_i\supseteq \Sigma_i$.

\begin{lemma}[Data-based polytopic representation of MASs]
	\label{representation}
	Suppose Assumptions~\ref{as:noise}-\ref{rank} hold. For each $i\in\{1,2,\ldots,N\}$, given state-input-output data $(X_i,X_{i+},U_i,Y_i)$ of the MAS \eqref{masnoisy}, the matrix polytope is defined as:
		\begin{equation}
		\label{M_AB}
		\begin{split}
			\mathcal{M}_{i}=\Big\{
			(\mathcal{M}_{{Z}_i},\mathcal{M}_{{C}_i})\Big|
			\mathcal{M}_{{Z}_i}&=(X_{i+}-\mathcal{M}_{W_i})\left[\begin{matrix}U_i \\X_i\end{matrix}\right]^{\dagger},\\
			\mathcal{M}_{{C}_i}&=(Y_{i}-\mathcal{M}_{V_i})X_i^{\dagger}\Big\}.
		\end{split}
	\end{equation}
This matrix polytope characterizes all matrices $(A_i, B_i,C_i)$ consistent with the data $(X_i,X_{i+},U_i,Y_i)$ and noise bounds, i.e., $ \Sigma_i \subseteq  \mathcal{M}_{i}$.
\end{lemma}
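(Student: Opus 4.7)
The plan is to prove the set containment $\Sigma_i \subseteq \mathcal{M}_i$ directly, by taking an arbitrary triple in $\Sigma_i$ and exhibiting explicit noise witnesses that certify its membership in $\mathcal{M}_i$. First I would unpack the definition \eqref{eq:sigma}: any $(A_i,B_i,C_i)\in \Sigma_i$ comes equipped with specific matrices $W_i^\star\in\mathcal{M}_{W_i}$ and $V_i^\star\in\mathcal{M}_{V_i}$ such that the recorded data satisfy $X_{i+}=A_iX_i+B_iU_i+W_i^\star$ and $Y_i=C_iX_i+V_i^\star$ exactly on the pre-collected horizon.

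Next I would collapse the dynamics identity using the block notation $Z_i:=[\,B_i\ A_i\,]$, yielding
\begin{equation*}
X_{i+}-W_i^\star = Z_i\left[\begin{matrix}U_i\\ X_i\end{matrix}\right].
\end{equation*}
Assumption~\ref{rank} guarantees that the stacked data matrix has full row rank, so its Moore--Penrose pseudoinverse acts as a genuine right inverse. Post-multiplying both sides by this pseudoinverse isolates
\begin{equation*}
Z_i = (X_{i+}-W_i^\star)\left[\begin{matrix}U_i\\ X_i\end{matrix}\right]^{\dagger},
\end{equation*}
which, by the choice $W_i=W_i^\star\in\mathcal{M}_{W_i}$, places $Z_i$ in the set $\mathcal{M}_{Z_i}$ of \eqref{M_AB}. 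An analogous manipulation of $Y_i-V_i^\star=C_iX_i$ gives $C_i=(Y_i-V_i^\star)X_i^{\dagger}\in\mathcal{M}_{C_i}$, provided one first notes that $X_i$ inherits full row rank from Assumption~\ref{rank}, since restricting a full-row-rank matrix to a subset of its rows preserves the linear independence of those rows. Combining both certificates delivers $(A_i,B_i,C_i)\in\mathcal{M}_i$, which is the desired inclusion.

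The only subtle point worth flagging is that the set-valued expressions in \eqref{M_AB} must genuinely describe polytopes for the statement to be meaningful. Because the maps $W_i\mapsto (X_{i+}-W_i)[U_i^\top\ X_i^\top]^{\dagger\top}$ and $V_i\mapsto (Y_i-V_i)X_i^{\dagger}$ are affine, each sends the polytope $\mathcal{M}_{W_i}$ (resp.\ $\mathcal{M}_{V_i}$) to the convex hull of the images of its vertices $\hat{W}_i^{(k)}$ (resp.\ $\hat{V}_i^{(k)}$), so $\mathcal{M}_i$ is a well-defined matrix polytope. I do not anticipate a real obstacle beyond this bookkeeping; the proof is essentially a one-line application of the right-inverse identity enabled by the persistency-of-excitation-style rank condition in Assumption~\ref{rank}, after which the polytopic structure is preserved automatically by the affinity of the map between noise and system-matrix variables.
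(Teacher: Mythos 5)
Your proposal is correct and follows essentially the same route as the paper's own proof: take an arbitrary element of $\Sigma_i$, use the noise witnesses from \eqref{eq:sigma}, and post-multiply by the pseudoinverse, which Assumption~\ref{rank} makes a genuine right inverse. If anything, you are slightly more careful than the paper in explicitly noting that $X_i$ inherits full row rank (so $X_iX_i^{\dagger}=I$) and that the affine image of a polytope is a polytope, details the paper leaves implicit.
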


\begin{proof}
	Consider any $(A_i,B_i,C_i)\in \Sigma_{i}$. From \eqref{eq:sigma}, we can find $W_i\in \mathcal{M}_{W_i}$ and $V_i\in \mathcal{M}_{V_i}$ such that:
	\begin{equation}
		\begin{split}
		\label{AB1}
		A_iX_i+B_iU_i&=X_{i+}-W_i,\\
		C_iX_i&=Y_i-V_i.
		\end{split}
	\end{equation}
By using \eqref{M_poly}, we can express every $W_i\in\mathcal{M}_{W_i}$ uniquely as  $W_i=\sum_{k=1}^{\gamma_{w_i}\rho}\beta_{W,i}^{(k)}\hat{W}_{i}^{(k)}$ with coefficients $\{\beta_{W,i}^{(k)}\}$. Furthermore, multiplying $\left[\begin{matrix}U_i \\X_i\end{matrix}\right]^{\dagger}$ from the right on both sides of the first equation in \eqref{AB1} yields:
	\begin{equation}
		\label{AB2}
		Z_i:=[B_i~ A_i]=\Big(X_{i+}-\sum_{k=1}^{\gamma_{w_i}\rho}\beta_{W,i}^{(k)}\hat{W}_{i}^{(k)}\Big)\left[\begin{matrix}U_i \\X_i\end{matrix}\right]^{\dagger}.
	\end{equation}

Similarly, it can be easily deduced that:
	\begin{equation}
		\label{C1}
	C_i=\Big(Y_{i}-\sum_{k=1}^{\gamma_{v_i}\rho}\beta_{V,i}^{(k)}\hat{V}_{i}^{(k)}\Big)X_i^{\dagger}.
	\end{equation}
Hence, for every $(A_i,B_i,C_i)\in \Sigma_i$, there exist $\beta_{W,i}^{(k)}$, $k=1,2,\ldots,\gamma_{w_i}\rho$, and $\beta_{V,i}^{(k)}$, $k=1,2,\ldots,\gamma_{v_i}\rho$, such that $\sum_{k=1}^{\gamma_{w_i}\rho}\beta_{W,i}^{(k)}=1$ and $\sum_{k=1}^{\gamma_{v_i}\rho}\beta_{v,i}^{(k)}=1$, satisfying \eqref{AB2} and \eqref{C1}.
	Therefore, for every $(A_i,B_i,C_i) \in \Sigma_i$, it also holds that $(A_i,B_i,C_i) \in \mathcal{M}_{\Sigma_i}$ for $i=1,2,\ldots,N$ as defined in \eqref{M_AB}, which concludes the proof.
\end{proof}

It is worth noting that a prevalent approach in prior literature for modeling unknown, yet bounded noise is to use the energy form, typically in terms of a quadratic full-block bound; see, e.g., \cite{Persis2020,Berberich2021,Waarde2022,li2022ddconsensus,Wang2023jas}. In contrast, we propose a novel approach by describing unknown noise using polytopes. We then develop a data-based polytopic representation of MASs in Lemma~\ref{representation}, which allows us to characterize all possible system matrices. This representation paves the way for addressing Problem \ref{problem1} in the subsequent analysis. Notably, compared with the quadratic matrix inequality-based representation in \cite{Waarde2022,li2022ddconsensus,Wang2023jas}, our proposed polytopic representation maintains the simplicity and compactness, while providing a more precise characterization of the system and resulting in less conservative data-based stability conditions.

\subsection{Solution to Output Regulation Equations with Noisy Data}
\label{section3b}

In this subsection, we introduce a data-driven approach to solve the output regulator equations by minimizing the norm of noise-matrix polytopes. This approach yields approximate solutions $(\Pi_i,\Gamma_i)$ for each follower directly from noisy data, without requiring exact knowledge of the true system matrices $(\bar{A}_i,\bar{B}_i,\bar{C}_i)$.

We point out that the presence of $w_i(T)$ and $v_i(T)$ in data $\mathbb{D}_i$ prevents us from obtaining accurate solutions to the initial output regulator equations \eqref{regulator}. Thus, we define $\Delta_{i1}$ and $\Delta_{i2}$ as the errors in the regulator equations caused by the noise-corrupted data. As an intermediate step, we modify the output regulator equations \eqref{regulator} for any $(A_i,B_i,C_i)\in\mathcal{M}_i$ as follows:
	\begin{equation}
	\label{eq:regu:noise}
	\begin{split}
	\Delta_{i1}&=A_i\Pi_i+B_i\Gamma_i-\Pi_i S,\\
	\Delta_{i2}&=C_i\Pi_i-H.
	\end{split}
\end{equation}

Next, we make a crucial observation that directly finding the solutions $\Pi_i$ and $\Gamma_i$ from \eqref{eq:regu:noise} is infeasible due to the presence of unknown terms $\Delta_{i1}$ and $\Delta_{i2}$. Accordingly, we formulate the problem of determining the gains $\Pi_i$ and $\Gamma_i$ for any $(A_i,B_i,C_i)\in\mathcal{M}_i$ as an optimization problem that minimizes $\Delta_{i1}$ and $\Delta_{i2}$ with respect to a chosen norm. According to the data-based polytopic representation in Lemma~\ref{representation}, we define the following optimization problem:
\begin{equation}\label{eq:opt}
	\underset{\Pi_i,\Gamma_i}{\min}
	~\Big\|\mathcal
	{M}_{Z_i}\left[\begin{matrix}\Gamma_i \\\Pi_i\end{matrix}\right]-\Pi_iS\Big\|_F+\|\mathcal
	{M}_{C_i}\Pi_i-H\|_F.
\end{equation}
Overall, Problem \eqref{eq:opt} is convex and can be efficiently solved using off-the-shelf solvers. Let $(\Pi_i^*,\Gamma_i^*)$ denote any optimal solution of  \eqref{eq:opt}  and $(\Delta_{i1}^*,\Delta_{i2}^*)$ denote the resulting error of output regulation equations in \eqref{eq:regu:noise}.


The following result provides upper bounds on the unknown regulator equation errors $\Delta_{i1}^*$ and $\Delta_{i2}^*$, which serves as a solid basis for the subsequent analysis and design of data-driven output synchronization. For simplicity, we omit the subscript $F$ in the sequel, using $\|\cdot\|$ for $\|\cdot\|_F$.

\begin{lemma}[Bounded regulator equation errors]\label{lem:bound}
	Consider the MAS \eqref{mas}-\eqref{leader} with the relaxed output regulator equations \eqref{eq:regu:noise}. Suppose that Assumptions \ref{graph}-\ref{rank} hold. For any $(A_i,B_i,C_i)\in \mathcal{M}_{i}$ and $i\in \{1,2,\ldots,N\}$, there exist two bounded matrix polytopes $\mathcal{M}_{\Delta_{i1}}$ and $\mathcal{M}_{\Delta_{i2}}$ such that the regulator equation errors $\Delta_{i1}\in \mathcal{M}_{\Delta_{i1}}$ and $\Delta_{i2}\in\mathcal{M}_{\Delta_{i2}}$.
\end{lemma}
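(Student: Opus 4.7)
The plan is to push the polytopic structure of the noise forward through the definitions of $\Delta_{i1}$ and $\Delta_{i2}$ in \eqref{eq:regu:noise}, and to exhibit explicit finite-vertex representations for the target polytopes $\mathcal{M}_{\Delta_{i1}}$ and $\mathcal{M}_{\Delta_{i2}}$. The key observation is that, for any fixed pair $(\Pi_i,\Gamma_i)$ (in particular the minimizer $(\Pi_i^*,\Gamma_i^*)$ of \eqref{eq:opt}), both $\Delta_{i1}$ and $\Delta_{i2}$ depend \emph{affinely} on $(A_i,B_i,C_i)$, and hence, by Lemma~\ref{representation}, affinely on the noise pair $(W_i,V_i)\in\mathcal{M}_{W_i}\times\mathcal{M}_{V_i}$. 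Since affine images of polytopes are polytopes, the conclusion is essentially forced; the work is in writing down the vertices.

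First, I would substitute $[B_i~A_i]=(X_{i+}-W_i)\left[\begin{matrix}U_i\\X_i\end{matrix}\right]^{\dagger}$ and $C_i=(Y_i-V_i)X_i^{\dagger}$ from Lemma~\ref{representation} into \eqref{eq:regu:noise}, giving
\begin{equation*}
\Delta_{i1}=(X_{i+}-W_i)\left[\begin{matrix}U_i\\X_i\end{matrix}\right]^{\dagger}\!\left[\begin{matrix}\Gamma_i\\\Pi_i\end{matrix}\right]-\Pi_i S, \quad \Delta_{i2}=(Y_i-V_i)X_i^{\dagger}\Pi_i-H,
\end{equation*}
which are manifestly affine in $(W_i,V_i)$. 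Expanding $W_i=\sum_{k=1}^{\gamma_{w_i}\rho}\beta_{W,i}^{(k)}\hat{W}_i^{(k)}$ and $V_i=\sum_{k=1}^{\gamma_{v_i}\rho}\beta_{V,i}^{(k)}\hat{V}_i^{(k)}$ with the respective simplex constraints, I would use $\sum_k\beta^{(k)}=1$ to absorb each noise-independent term into the convex combination, producing the explicit vertices
\begin{equation*}
\hat{\Delta}_{i1}^{(k)}=(X_{i+}-\hat{W}_i^{(k)})\left[\begin{matrix}U_i\\X_i\end{matrix}\right]^{\dagger}\!\left[\begin{matrix}\Gamma_i\\\Pi_i\end{matrix}\right]-\Pi_i S,
\end{equation*}
for $k=1,\ldots,\gamma_{w_i}\rho$, and analogously $\hat{\Delta}_{i2}^{(k)}=(Y_i-\hat{V}_i^{(k)})X_i^{\dagger}\Pi_i-H$ for $k=1,\ldots,\gamma_{v_i}\rho$.

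Finally, defining $\mathcal{M}_{\Delta_{i1}}$ and $\mathcal{M}_{\Delta_{i2}}$ as the convex hulls of these finite vertex sets in the style of \eqref{M_poly} gives bounded matrix polytopes; boundedness is automatic, since a convex hull of finitely many fixed matrices is compact in the Frobenius norm. The inclusions $\Delta_{i1}\in\mathcal{M}_{\Delta_{i1}}$ and $\Delta_{i2}\in\mathcal{M}_{\Delta_{i2}}$ follow by carrying the same convex coefficients $\{\beta_{W,i}^{(k)}\},\{\beta_{V,i}^{(k)}\}$ from the original polytopic decomposition of $(W_i,V_i)$ across the affine map. The main — still routine — obstacle is the bookkeeping step where the simplex identity is invoked to recast the affine offset as part of a genuine convex combination; no deeper difficulty is anticipated, because the data $(X_i,X_{i+},U_i,Y_i)$, the leader pair $(S,H)$, and the fixed gains $(\Pi_i,\Gamma_i)$ all have finite Frobenius norm, so the constructed vertices are well-defined and the polytopes inherit automatic boundedness.
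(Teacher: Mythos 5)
Your proof is correct, but it takes a genuinely different route from the paper's. You fix $(\Pi_i,\Gamma_i)$, observe that $\Delta_{i1}$ and $\Delta_{i2}$ in \eqref{eq:regu:noise} are affine in $(A_i,B_i,C_i)$ and hence, via Lemma~\ref{representation}, affine in $(W_i,V_i)$, and push the noise polytopes forward; the simplex identity $\sum_k\beta^{(k)}=1$ correctly absorbs the constant offsets $X_{i+}\bigl[\begin{smallmatrix}U_i\\X_i\end{smallmatrix}\bigr]^{\dagger}\bigl[\begin{smallmatrix}\Gamma_i\\\Pi_i\end{smallmatrix}\bigr]-\Pi_iS$ and $Y_iX_i^{\dagger}\Pi_i-H$ into the convex combination, so your vertex formulas are valid and boundedness is immediate. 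The paper instead anchors the argument at the \emph{model-based} solution $(\Pi_i^s,\Gamma_i^s)$ of \eqref{regulator} for the true system: it represents $(\bar A_i,\bar B_i,\bar C_i)$ with the (unknown) true coefficients $\bar\beta_{W,i}^{(k)},\bar\beta_{V,i}^{(k)}$, uses $(\Pi_i^s,\Gamma_i^s)$ as a feasible candidate for \eqref{eq:opt}, and writes the errors as $\sum_k(\bar\beta^{(k)}-\beta^{(k)})\hat W_i^{(k)}(\cdot)$, yielding the polytopes \eqref{eq:poly:delta} together with explicit norm bounds of the form $2\gamma_{w_i}\rho\bar W_i\|\cdot\|$ and $2\gamma_{v_i}\rho\bar V_i\|\cdot\|$. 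The trade-off: your polytopes have vertices computable directly from the data and the chosen $(\Pi_i,\Gamma_i)$, whereas the paper's \eqref{eq:poly:delta} involves the unknown $\bar\beta$'s and $(\Pi_i^s,\Gamma_i^s)$ and so is not computable; on the other hand, the paper's derivation delivers the quantitative scaling of the regulator-equation errors with the noise magnitude, which is what later supports Remark~\ref{rem:regulator} and the noise-free corollary (errors vanish as $\bar W_i,\bar V_i\to 0$). Your construction recovers that limit only indirectly (the polytope collapses to the offset point, which is zero at the optimum when the noise vanishes), so if you intend your lemma to feed the downstream $\Delta$-optimality analysis you should add the one-line norm estimate on the diameter of your polytopes in terms of $\max_k\|\hat W_i^{(k)}\|$ and $\max_k\|\hat V_i^{(k)}\|$. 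For the lemma as literally stated, your argument is complete.
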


\begin{proof}
	It follows from the data-based polytopic representation in Lemma \ref{representation} that the true system matrices $(\bar{A}_i,\bar{B}_i,\bar{C}_i)\in \mathcal{M}_i$ can be expressed by a unique set of $\bar{\beta}_{W,i}^{(k)}$ and $\bar{\beta}_{V,i}^{(k)}$, respectively, as follows:
\begin{equation}\label{eq:rep:true}
	\begin{split}
	[\bar{B}_i\;\, \bar{A}_i]&=\Big(X_{i+}-\sum_{k=1}^{\gamma_{w_i}\rho}\bar{\beta}_{W,i}^{(k)}\hat{W}_{i}^{(k)}\Big)\left[\begin{matrix}U_i \\X_i\end{matrix}\right]^{\dagger}\\
	\bar{C}_i&=\Big(Y_{i}-\sum_{k=1}^{\gamma_{v_i}\rho}\bar{\beta}_{V,i}^{(k)}\hat{V}_{i}^{(k)}\Big)X_i^{\dagger}.
	\end{split}
\end{equation}

When $\bar{\beta}_{W,i}^{(k)}$ and $\bar{\beta}_{V,i}^{(k)}$ are known, we denote the solution of \eqref{regulator} associated with $(\bar{A}_i,\bar{B}_i,\bar{C}_i)$ by $(\Pi_i^s, \Gamma_i^s)$, which also serves as a candidate solution for the optimization problem \eqref{eq:opt}. 
Nonetheless, it may not necessarily yield the optimal objective value.
By utilizing \eqref{regulator}, \eqref{M_AB}, and \eqref{eq:rep:true}, it can be shown that the first term of \eqref{eq:opt} satisfies
	\begin{equation}
	\label{eq:term1}
	\begin{split}
	&
	\quad ~\Big\|{M}_{Z_i}\left[\begin{matrix}\Gamma_i^s \\\Pi_i^s\end{matrix}\right]-\Pi_i^sS\Big\|\\
	&=\Big\|(X_{i+}-\mathcal{M}_{W_i})\left[\begin{matrix}U_i \\X_i\end{matrix}\right]^{\dagger}\left[\begin{matrix}\Gamma_i^s \\\Pi_i^s\end{matrix}\right]-[\bar{B}_i\;\, \bar{A}_i]\left[\begin{matrix}\Gamma_i^s \\\Pi_i^s\end{matrix}\right]\Big\|\\
&=\Big\|\sum_{k=1}^{\gamma_{w_i}\rho}
\big[(\bar{\beta}_{W,i}^{(k)}-{\beta}_{W,i}^{(k)})\hat{W}_{i}^{(k)}\big]
\left[\begin{matrix}U_i \\X_i\end{matrix}\right]^{\dagger}\left[\begin{matrix}\Gamma_i^s \\\Pi_i^s\end{matrix}\right] \Big \|\\
		& \leq 2 \gamma_{w_i}\rho\bar{W}_{i}\Big\|\left[\begin{matrix}U_i \\X_i\end{matrix}\right]^{\dagger}\left[\begin{matrix}\Gamma_i^s \\\Pi_i^s\end{matrix}\right] \Big \|
	\end{split}
\end{equation}
with $\bar{W}_{i}:=\max_{k\in[1,\gamma_{w_i}\rho]}\|\hat{W}_{i}^{(k)}\|$.

Similarly, for the second term of \eqref{eq:opt}, we have
\begin{equation}\label{eq:term2}
	\begin{split}
		\mathcal
		\|{M}_{C_i}\Pi_i^s-H\| &=\Big\|\sum_{k=1}^{\gamma_{v_i}\rho}\big[(\bar{\beta}_{V,i}^{(k)}-{\beta}_{V,i}^{(k)})\hat{V}_{i}^{(k)}\big] X_i^{\dagger}\Pi_i^s \Big \|\\
		&\leq 2 \gamma_{v_i}\rho\bar{V}_{i} \|X_i^{\dagger}\Pi_i^s\|
	\end{split}
\end{equation}
with $\bar{V}_{i}:=\max_{k\in[1,\gamma_{v_i}\rho]}\|\hat{V}_{i}^{(k)}\|$.

By substituting the optimal solution $(\Pi_i^*, \Gamma_i^*)$ of problem \eqref{eq:opt} into \eqref{eq:regu:noise}, it follows from \eqref{eq:term1} and \eqref{eq:term2} that the regulator equation errors $\Delta_{i1}^*$ and $\Delta_{i2}^*$ adhere to
\begin{subequations}\label{eq:poly:delta}
\begin{align}
	\Delta_{i1}^*&\in\! \sum_{k=1}^{\gamma_{w_i}\rho}\big[(\bar{\beta}_{W,i}^{(k)}-{\beta}_{W,i}^{(k)})\hat{W}_{i}^{(k)}\big]\left[\begin{matrix}U_i \\X_i\end{matrix}\right]^{\dagger}\left[\begin{matrix}\Gamma_i^s \\\Pi_i^s\end{matrix}\right]\! \triangleq\! \mathcal{M}_{\Delta_{i1}}\\
	\Delta_{i2}^*&\in\! \sum_{k=1}^{\gamma_{v_i}\rho}\big[(\bar{\beta}_{V,i}^{(k)}-{\beta}_{V,i}^{(k)})\hat{V}_{i}^{(k)}\big] X_i^{\dagger}\Pi_i^s\triangleq \mathcal{M}_{\Delta_{i2}}
\end{align}
\end{subequations}
where $\mathcal{M}_{\Delta{i1}}$ and $\mathcal{M}_{\Delta_{i2}}$ are bounded polytopes, implying the boundedness of the regulator equation errors $\Delta_{i1}^*$ and $\Delta_{i2}^*$.
\end{proof}

It is worth emphasizing that \eqref{eq:opt} provides a data-driven method to compute the gain matrices $\Pi_i$ and $\Gamma_i$ independently of system matrices. Furthermore, Lemma \ref{lem:bound} ensures the practicability of the obtained $\Pi_i$ and $\Gamma_i$ for the subsequent analysis and design of data-driven output synchronization by constraining the regulator equation errors $\Delta_{i1}$ and $\Delta_{i2}$ within the bounded matrix polytopes $\mathcal{M}_{\Delta_{i1}}$ and $\mathcal{M}_{\Delta_{i2}}$.

\begin{remark}[Relationship between noise and output regulator equations]
\label{rem:regulator}
In fact, problem \eqref{eq:opt} addresses a variant of the output regulation problem that incorporates unknown noise. This implies that the solutions $\Pi_i$ and $\Gamma_i$ of \eqref{eq:opt} satisfy the relaxed equations \eqref{eq:regu:noise} when the noise $w_i(T)$ and $v_i(T)$ is not identically equal to zero.
According to \eqref{eq:poly:delta}, the size of the matrix polytopes $\mathcal{M}_{\Delta_{i1}}$ and $\mathcal{M}_{\Delta_{i2}}$ depends on the upper bound of noise, indicating that $\Delta_{i1}$ and $\Delta_{i2}$ increase with the noise levels $w_i(T)$ and $v_i(T)$. However, in the noise-free case where $w_i(T)=0$ and $v_i(T)=0$, the solution of \eqref{eq:opt} achieves zero cost (i.e., $\Delta_{i1}=0$ and $\Delta_{i2}=0$) and satisfies \eqref{regulator}.

\end{remark}





\subsection{Controller Design and Output Synchronization Analysis}

In the following, we focus on Problem \ref{problem1}, which involves learning stabilizing controllers $K_i$ for each follower and analyzing the output synchronization of the MAS \eqref{mas}-\eqref{leader} under the proposed distributed control protocol (\eqref{controller}, \eqref{observer}, and \eqref{eq:regu:noise}). To begin with, we reconstruct the dynamics of the virtual tracking error $\xi_i(t)$ as follows:
\begin{equation}
\label{eq:recon:aux}
	\begin{split}
	\xi_i(t+1)&=(\bar{A}_i+\bar{B}_iK_i)\xi_i(t)+\tilde{\delta}_i(t)
	\end{split}
\end{equation}
where we utilize \eqref{mas}, \eqref{controller}, \eqref{observer}, \eqref{eq:regu:noise}, and define $\tilde{\delta}_i(t) := \Delta_{i1}(x_0(t)+\delta_i(t)) + \Pi_i(1+d_i+g_i)^{-1}Fz_i(t)$.

We observe that $\xi_i(t)$ belongs to a well-defined polytope $\mathcal{P}_{\xi_i,t}$, i.e., $\xi_i(t)\in \mathcal{P}_{\xi_i,t}$ for $t\in \mathbb{N}$ and $i\in\{1,2,\ldots,N\}$. The next lemma provides the definition and boundedness guarantee of the polytope $\mathcal{P}_{\xi_i,t}$.

\begin{lemma}[Virtual tracking error polytope] 
\label{lem:auxerr:poly}
Under Assumptions~\ref{graph}-\ref{rank}, let $\mathcal{P}_{\xi_i,0} = \xi_i(0)$. At time $t\in \mathbb{N}$, the reachable set of the virtual tracking error is given by:
\begin{equation}
\label{eq:auxerr:poly}
	\mathcal{P}_{\xi_i,t}:=(\bar{A}_i+\bar{B}_iK_i)^t\xi_i(0)+\sum_{\ell=0}^{t-1}(\bar{A}_i+\bar{B}_iK_i)^\ell\mathcal{P}_{\tilde{\delta}_i,t-\ell-1}
\end{equation}
where $\mathcal{P}_{\tilde{\delta}_i,t}:=\Delta_{i1}^*\mathcal{P}_{x_0}+\Delta_{i1}^*\delta_i(t)+\Pi_i(1+d_i+g_i)^{-1}Fz_i(t)$, and $\mathcal{P}_{x_0}$ represents a well-defined polytope. Moreover, if there exist matrices $K_i$ and $F$ such that $\bar{A}_i+\bar{B}_iK_i$ and $I_N\otimes S- (I_N+D+G)^{-1}(\mathcal{L}+G)\otimes F$ are Schur stable, then $\mathcal{P}_{\xi_i,t}$ is a uniformly bounded set for any $t\in \mathbb{N}$ and $i\in \{1,2,\ldots,N\}$.

\end{lemma}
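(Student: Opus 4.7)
The plan is to prove \eqref{eq:auxerr:poly} by unrolling the linear recursion \eqref{eq:recon:aux}, and then establish uniform boundedness by combining geometric decay of $(\bar{A}_i+\bar{B}_iK_i)^\ell$ with uniform bounds on every ingredient of $\mathcal{P}_{\tilde{\delta}_i,t}$.

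First I would iterate \eqref{eq:recon:aux} $t$ times to obtain
\[
\xi_i(t)=(\bar{A}_i+\bar{B}_iK_i)^t\xi_i(0)+\sum_{\ell=0}^{t-1}(\bar{A}_i+\bar{B}_iK_i)^{\ell}\tilde{\delta}_i(t-\ell-1).
\]
Because $\tilde{\delta}_i(\tau)=\Delta_{i1}^{*}x_0(\tau)+\Delta_{i1}^{*}\delta_i(\tau)+\Pi_i(1+d_i+g_i)^{-1}Fz_i(\tau)$, it suffices to identify a bounded polytope $\mathcal{P}_{x_0}$ that contains $x_0(\tau)$ for every $\tau\in\mathbb{N}$. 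Assumption~\ref{pole} places all eigenvalues of $S$ on the unit circle and forbids repetition, so the matrix family $\{S^{\tau}\}_{\tau\in\mathbb{N}}$ is power-bounded and $\|x_0(\tau)\|\le\kappa\|x_0(0)\|$ for some $\kappa>0$ independent of $\tau$. The orbit therefore lies in, e.g., the $\ell_\infty$-ball of radius $\kappa\|x_0(0)\|$, which is a bounded polytope $\mathcal{P}_{x_0}$. Substituting $x_0(\tau)\in\mathcal{P}_{x_0}$ gives $\tilde{\delta}_i(\tau)\in\mathcal{P}_{\tilde{\delta}_i,\tau}$, and pushing this through the Minkowski sum in the unrolled expression yields \eqref{eq:auxerr:poly} exactly.

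For uniform boundedness, I would estimate each summand separately. Schur stability of $\bar{A}_i+\bar{B}_iK_i$ furnishes constants $c_i\ge 1$ and $\rho_i\in(0,1)$ with $\|(\bar{A}_i+\bar{B}_iK_i)^{\ell}\|\le c_i\rho_i^{\ell}$, so the transient $(\bar{A}_i+\bar{B}_iK_i)^{t}\xi_i(0)$ decays to zero. Schur stability of $I_N\otimes S-(I_N+D+G)^{-1}(\mathcal{L}+G)\otimes F$ applied to \eqref{eq:esti:entire} implies that $\|\delta(t)\|$, hence each $\|\delta_i(t)\|$ and the linear combination $\|z_i(t)\|$, decays exponentially and is therefore uniformly bounded. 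Combined with the fact that $\mathcal{M}_{\Delta_{i1}}$ is a bounded matrix polytope (Lemma~\ref{lem:bound}) and $\mathcal{P}_{x_0}$ is bounded, every element of $\mathcal{P}_{\tilde{\delta}_i,\tau}$ has norm bounded by a constant $M_i<\infty$ independent of $\tau$. Consequently
\[
\sup_{v\in\mathcal{P}_{\xi_i,t}}\|v\|\le c_i\rho_i^{t}\|\xi_i(0)\|+\sum_{\ell=0}^{t-1}c_i\rho_i^{\ell}M_i\le c_i\|\xi_i(0)\|+\frac{c_iM_i}{1-\rho_i},
\]
which is a bound independent of $t$, yielding the asserted uniform boundedness.

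The step I expect to require the most care is the construction of $\mathcal{P}_{x_0}$: the leader orbit is in general quasi-periodic rather than periodic, so the enclosing polytope must be inferred from the spectral structure of $S$ rather than from any finite segment of the trajectory. The non-repeated-pole hypothesis in Assumption~\ref{pole} is precisely what rules out secular (polynomial-in-$t$) growth in $S^{t}x_0(0)$ and permits such an enclosure; once this is settled, the remainder is routine propagation of bounded polytopes through Minkowski sums, linear maps, and sub-multiplicativity of the norm.
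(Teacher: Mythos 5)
Your proof is correct and follows essentially the same route as the paper's: invoke Assumption~\ref{pole} to enclose the leader orbit in a bounded polytope $\mathcal{P}_{x_0}$, use Schur stability of the observer error dynamics \eqref{eq:esti:entire} to bound $\delta_i(t)$ and $z_i(t)$ and hence $\tilde{\delta}_i(t)$, and combine with Schur stability of $\bar{A}_i+\bar{B}_iK_i$ to obtain uniform boundedness of the unrolled recursion. You merely make explicit two steps the paper only asserts --- the diagonalizability-based power-boundedness of $S^{\tau}$ underlying $\mathcal{P}_{x_0}$ and the geometric-series estimate $c_i\|\xi_i(0)\|+c_iM_i/(1-\rho_i)$ --- which is a welcome tightening rather than a different argument.
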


\begin{proof} 
	First, we invoke Assumption~\ref{pole} to guarantee the boundedness of the leader's state $x_0(t)$ for any $t\in \mathbb{N}$. By constructing a well-defined polytope $\mathcal{P}_{x_0}$, we ensure that $x_0(t)\in \mathcal{P}_{x_0}$ holds for all $t\in \mathbb{N}$. Moreover, considering the dynamics of the observer estimation error $\delta_i(t)$ in \eqref{eq:esti:entire}, we observe that if $I_N\otimes S- (I_N+D+G)^{-1}(\mathcal{L}+G)\otimes F$ is Schur stable, the observer convergence is assured, i.e., $\lim_{t\to \infty}\delta_i(t)= 0$. As a result, $\delta_i(t)$ and, consequently, $z_i(t)$ defined in \eqref{eq:auxerr:dyn}, are uniformly bounded. Building upon these findings, we deduce from \eqref{eq:recon:aux} that $\tilde{\delta}_i(t)$ forms a uniformly bounded sequence. Thus, there exists a bounded polytope $\mathcal{P}_{\tilde{\delta}_i,t}$ defined as $\mathcal{P}_{\tilde{\delta}_i,t}:=\Delta_{i1}^*\mathcal{P}_{x_0}+\Delta_{i1}^*\delta_i(t)+\Pi_i(1+d_i+g_i)^{-1}Fz_i(t)$ such that $\tilde{\delta}_i(t)\in\mathcal{P}_{\tilde{\delta}_i,t}$ holds for $t\in\mathbb{N}$.
	
Furthermore, due to the Schur stability of $\bar{A}_i+\bar{B}_iK_i$ and the boundedness of $\delta_i(0)$, there exists a uniformly bounded set, defined as in \eqref{eq:auxerr:poly}, denoted by $\mathcal{P}_{\xi_i,t}$. This ensures that $\xi_i(t) \in \mathcal{P}_{\xi_i,t}$ for any $t\in \mathbb{N}$ and $i\in \{1,2,\ldots,N\}$, thereby completing the proof.
\end{proof}

\begin{remark}[Design a stabilizing matrix $F$]
	\label{matrixF}
	We would like to emphasize that there exist techniques for computing a stabilizing matrix $F$ while ensuring the condition that $I_N\otimes S$ $- (I_N+D+G)^{-1}(\mathcal{L}+G)\otimes F$ is Schur stable. An effective approach is to solve discrete-time Riccati equations, as described in \cite{Kristian2013}, which provides a matrix $F$ satisfying the desired stability condition.
	
\end{remark}

At this stage, two challenges need to be addressed. First, the polytope $\mathcal{P}_{\xi_i,t}$ derived in Lemma~\ref{lem:auxerr:poly} cannot be directly applied in practical scenarios due to the unknown true system matrices $\bar{A}_i$ and $\bar{B}_i$. Second, it is crucial to ensure the stability of $\bar{A}_i+\bar{B}_iK_i$. To tackle the former, we aim to construct a conservative approximation $\bar{\mathcal{P}}_{\xi_i,t}$ of $\mathcal{P}_{\xi_i,t}$ such that $\mathcal{P}_{\xi_i,t}\subseteq \bar{\mathcal{P}}_{\xi_i,t}$. The latter issue can be addressed by designing a controller gain $K_i$ that stabilizes all $(A_i,B_i)$ in $\Sigma_{i}$.

 As previously mentioned, we proceed to construct an approximation $\bar{\mathcal{P}}_{\xi_i,t}$ of agent $i$. First, we observe from Lemma~\ref{lem:auxerr:poly} that
\begin{equation}
\begin{split}
	\mathcal{P}_{\xi_i,t}&=(\bar{A}_i+\bar{B}_iK_i)\mathcal{P}_{\xi_i,t-1}+\mathcal{P}_{\tilde{\delta}_i,t}.
\end{split}
\end{equation}
To this end, we define the matrix polytope as
\begin{equation}\label{eq:Mkz}
	\mathcal{M}_{Z_i}^K:=\mathcal{M}_{Z_i}\left[\begin{matrix}K_i \\I\end{matrix}\right].
\end{equation}
The approximation $\bar{\mathcal{P}}_{\xi_i,t}$ is obtained through the following lemma.

\begin{lemma}[Virtual tracking error polytope approximation] \label{lem:poly:approxi}
For $i\in\{1,2,\ldots,N\}$, let $\bar{\mathcal{P}}_{\xi_i,t}$ be defined as
	\begin{equation}
	\label{eq:lem5}
	\begin{split}
			\bar{\mathcal{P}}_{\xi_i,t}&:=\mathcal{M}_{Z_i}^{K}\bar{\mathcal{P}}_{\xi_i,t-1}+\mathcal{P}_{\tilde{\delta}_i,t}
	\end{split}
	\end{equation}
with $\bar{\mathcal{P}}_{\xi_i,0}=\mathcal{P}_{\xi_i,0}$. Then, it holds that $\mathcal{P}_{\xi_i,t}\subseteq \bar{\mathcal{P}}_{\xi_i,t}$ for any $t\in \mathbb{N}$.
\end{lemma}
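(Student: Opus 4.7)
The plan is to prove the claim by induction on $t$, leveraging two key facts: (i) the true closed-loop matrix $\bar{A}_i+\bar{B}_iK_i$ is a single element of the matrix polytope $\mathcal{M}_{Z_i}^{K}$ defined in \eqref{eq:Mkz}, and (ii) set addition and matrix-polytope-times-set operations are monotone with respect to set inclusion.

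First, I would establish the base case $t=0$, which is immediate since $\bar{\mathcal{P}}_{\xi_i,0}:=\mathcal{P}_{\xi_i,0}=\xi_i(0)$ by definition, so the inclusion trivially holds with equality. Next, I would rewrite the recursion from Lemma~\ref{lem:auxerr:poly} in one-step form, namely
\begin{equation*}
\mathcal{P}_{\xi_i,t}=(\bar{A}_i+\bar{B}_iK_i)\mathcal{P}_{\xi_i,t-1}+\mathcal{P}_{\tilde{\delta}_i,t},
\end{equation*}
which aligns structurally with the definition of $\bar{\mathcal{P}}_{\xi_i,t}$ in \eqref{eq:lem5}.

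For the inductive step, assume $\mathcal{P}_{\xi_i,t-1}\subseteq \bar{\mathcal{P}}_{\xi_i,t-1}$. The core observation is that by Lemma~\ref{representation} we have $(\bar{A}_i,\bar{B}_i,\bar{C}_i)\in\Sigma_i\subseteq\mathcal{M}_i$, so there exist coefficients $\bar{\beta}_{W,i}^{(k)}$ reproducing $[\bar{B}_i\,\;\bar{A}_i]\in\mathcal{M}_{Z_i}$. Multiplying on the right by $[K_i^\top\;\,I]^\top$ yields $\bar{A}_i+\bar{B}_iK_i\in\mathcal{M}_{Z_i}^{K}$. Hence, for every $\xi\in\mathcal{P}_{\xi_i,t-1}$, the product $(\bar{A}_i+\bar{B}_iK_i)\xi$ belongs to $\mathcal{M}_{Z_i}^{K}\mathcal{P}_{\xi_i,t-1}\subseteq \mathcal{M}_{Z_i}^{K}\bar{\mathcal{P}}_{\xi_i,t-1}$, where the final inclusion uses the inductive hypothesis together with the monotonicity of polytope-times-set multiplication. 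Adding $\mathcal{P}_{\tilde{\delta}_i,t}$ to both sides and using monotonicity of Minkowski sums delivers
\begin{equation*}
\mathcal{P}_{\xi_i,t}\subseteq \mathcal{M}_{Z_i}^{K}\bar{\mathcal{P}}_{\xi_i,t-1}+\mathcal{P}_{\tilde{\delta}_i,t}=\bar{\mathcal{P}}_{\xi_i,t},
\end{equation*}
completing the induction.

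The only genuine subtlety I anticipate is making the inclusion $(\bar{A}_i+\bar{B}_iK_i)\mathcal{P}_{\xi_i,t-1}\subseteq \mathcal{M}_{Z_i}^{K}\bar{\mathcal{P}}_{\xi_i,t-1}$ rigorous, since the left-hand side uses a single (unknown) matrix while the right-hand side is a set generated by a matrix polytope acting on a set. I would handle this explicitly by picking any element $\xi\in\mathcal{P}_{\xi_i,t-1}$, writing $\bar{A}_i+\bar{B}_iK_i=\sum_{k}\bar{\beta}_{W,i}^{(k)}M^{(k)}$ for the vertex matrices $M^{(k)}$ of $\mathcal{M}_{Z_i}^{K}$, and noting that $\sum_{k}\bar{\beta}_{W,i}^{(k)}(M^{(k)}\xi)$ lies in $\mathcal{M}_{Z_i}^{K}\xi\subseteq\mathcal{M}_{Z_i}^{K}\bar{\mathcal{P}}_{\xi_i,t-1}$ by the inductive hypothesis. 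Beyond this, the argument is purely a monotonicity calculation and requires no additional machinery.
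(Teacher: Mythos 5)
Your proof is correct and follows essentially the same route as the paper's: induction on $t$, using $\bar{A}_i+\bar{B}_iK_i = Z_i\bigl[\begin{smallmatrix}K_i\\ I\end{smallmatrix}\bigr]\in\mathcal{M}_{Z_i}^{K}$ together with the inductive hypothesis and monotonicity of the set operations. Your explicit treatment of the inclusion $(\bar{A}_i+\bar{B}_iK_i)\mathcal{P}_{\xi_i,t-1}\subseteq\mathcal{M}_{Z_i}^{K}\bar{\mathcal{P}}_{\xi_i,t-1}$ via convex combinations of vertex matrices is a welcome elaboration of a step the paper states without detail.
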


\begin{proof}
	The validity of \eqref{eq:lem5} for $t = 0$ is straightforward. For a fixed $t>0$, leveraging the induction step, we have $\mathcal{P}_{\xi_i,t-1}\subseteq \bar{\mathcal{P}}_{\xi_i,t-1}$ and $\bar{A}_i+\bar{B}_iK_i=Z_i\left[\begin{matrix}K_i\\ I\end{matrix}\right]\in \mathcal{M}_{Z_i}\left[\begin{matrix}K_i\\ I\end{matrix}\right]$. Consequently, we obtain $(\bar{A}_i+\bar{B}_iK_i)\mathcal{P}_{\xi_i,t-1}\subseteq\mathcal{M}_{Z_i}^{K}\bar{\mathcal{P}}_{\xi_i,t-1}$, which completes the proof.

\end{proof}

Drawing on the aforementioned lemma, we can build on the result in \cite{russo2022tube} and establish the following lemma to ensure the stability of the virtual tracking error polytope $\bar{\mathcal{P}}_{\xi_i,t}$.

\begin{lemma}[Stability of the virtual tracking error polytope] 
\label{lem:stab}
Given Assumptions~\ref{graph}-\ref{rank} and an initial value $\xi_i(0)$ for agent $i\in \{1,2,\ldots,N\}$, there exists a polytope $\bar{\mathcal{P}}_i\subset\mathbb{R}^{n_i}$ satisfying the following properties: i) $\bar{\mathcal{P}}_{\xi_i,t}\subset\bar{\mathcal{P}}_i$ for all $t\in\mathbb{N}$; and ii) $\bar{\mathcal{P}}_i$ is an invariant set, i.e., for $\xi_i(t)\in \bar{\mathcal{P}}_i$, it holds that $\mathcal{M}_{Z_i}^{K}\xi_i(t)+\tilde{\delta}_i(t)\in \bar{\mathcal{P}}_i$ for all $\tilde{\delta}_i(t)\in \mathcal{P}_{\tilde{\delta}_i,t}$ and $t\in\mathbb{N}$.

\end{lemma}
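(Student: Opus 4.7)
The plan is to construct $\bar{\mathcal{P}}_i$ as a polytopic robust positively invariant (RPI) set for the uncertain one-step map $\xi\mapsto M\xi+\tilde{\delta}$ with $M\in\mathcal{M}_{Z_i}^{K}$ and $\tilde{\delta}$ in a uniform bounding polytope, and then verify the two properties via the recursion \eqref{eq:lem5}. I first establish a uniform bound on the disturbance: by Lemma~\ref{lem:auxerr:poly}, $\delta_i(t)$ and $z_i(t)$ are uniformly bounded whenever $F$ stabilises the observer, and $x_0(t)\in\mathcal{P}_{x_0}$ for all $t$ by Assumption~\ref{pole}; coupled with the bounded polytope $\mathcal{M}_{\Delta_{i1}}$ supplied by Lemma~\ref{lem:bound}, this produces a single bounded polytope $\tilde{\mathcal{P}}_{\tilde{\delta}_i}:=\overline{\mathrm{co}}\bigcup_{t\in\mathbb{N}}\mathcal{P}_{\tilde{\delta}_i,t}$ with $\mathcal{P}_{\tilde{\delta}_i,t}\subseteq\tilde{\mathcal{P}}_{\tilde{\delta}_i}$ for every $t$.

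The second ingredient, to be provided by the SDP-based design of $K_i$ developed later in the paper, is a common quadratic Lyapunov certificate $P_i\succ\mathbf{0}$ satisfying $M^\top P_iM-P_i\prec\mathbf{0}$ for all $M\in\mathcal{M}_{Z_i}^{K}$; this yields a uniform contraction $\|M\xi\|_{P_i}\le\alpha_i\|\xi\|_{P_i}$ with some $\alpha_i\in(0,1)$. Let $r_i^{\star}:=\sup_{\tilde{\delta}\in\tilde{\mathcal{P}}_{\tilde{\delta}_i}}\|\tilde{\delta}\|_{P_i}$ and $R_i:=\max\{\|\xi_i(0)\|_{P_i},\,r_i^{\star}/(1-\alpha_i)\}+\epsilon$ for an arbitrary $\epsilon>0$. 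Then the ellipsoid $\mathcal{E}_i:=\{\xi:\|\xi\|_{P_i}\le R_i\}$ enjoys the strict inclusion $\mathcal{M}_{Z_i}^{K}\mathcal{E}_i+\tilde{\mathcal{P}}_{\tilde{\delta}_i}\subseteq\{\xi:\|\xi\|_{P_i}\le\alpha_iR_i+r_i^{\star}\}\subset\mathcal{E}_i$ and contains $\xi_i(0)$. I then pick any polyhedral set $\bar{\mathcal{P}}_i$ satisfying $\{\xi:\|\xi\|_{P_i}\le\alpha_iR_i+r_i^{\star}\}\cup\{\xi_i(0)\}\subseteq\bar{\mathcal{P}}_i\subseteq\mathcal{E}_i$, whose existence follows from standard polyhedral approximation of compact convex sets. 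Property (ii) is immediate: the image of any $\xi\in\bar{\mathcal{P}}_i\subseteq\mathcal{E}_i$ under $M\in\mathcal{M}_{Z_i}^K$ plus any $\tilde{\delta}_i(t)\in\mathcal{P}_{\tilde{\delta}_i,t}\subseteq\tilde{\mathcal{P}}_{\tilde{\delta}_i}$ lies in $\{\xi:\|\xi\|_{P_i}\le\alpha_iR_i+r_i^{\star}\}\subseteq\bar{\mathcal{P}}_i$. Property (i) then follows by induction on $t$ through \eqref{eq:lem5}: the base case $\bar{\mathcal{P}}_{\xi_i,0}=\{\xi_i(0)\}\subseteq\bar{\mathcal{P}}_i$ is built in, and the inductive step $\bar{\mathcal{P}}_{\xi_i,t}=\mathcal{M}_{Z_i}^{K}\bar{\mathcal{P}}_{\xi_i,t-1}+\mathcal{P}_{\tilde{\delta}_i,t}\subseteq\mathcal{M}_{Z_i}^{K}\bar{\mathcal{P}}_i+\tilde{\mathcal{P}}_{\tilde{\delta}_i}\subseteq\bar{\mathcal{P}}_i$ uses invariance.

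The principal obstacle is the interface between the ellipsoidal invariance naturally furnished by the common quadratic Lyapunov function and the \emph{polytopic} invariance demanded by the lemma: one must exhibit a polytope that both swallows every reachable set $\bar{\mathcal{P}}_{\xi_i,t}$ produced by the matrix-polytope recursion and is itself forward-invariant under the same polytopic map plus polytopic perturbation. Adapting the polyhedral-Lyapunov/tube construction of \cite{russo2022tube} to the matrix polytope $\mathcal{M}_{Z_i}^{K}$ — in particular engineering the sandwich $\{\xi:\|\xi\|_{P_i}\le\alpha_iR_i+r_i^{\star}\}\cup\{\xi_i(0)\}\subseteq\bar{\mathcal{P}}_i\subseteq\mathcal{E}_i$ without sacrificing the strict contraction — is the delicate part; the rest is standard robust set-invariance bookkeeping, and in the noise-free limit $\tilde{\mathcal{P}}_{\tilde{\delta}_i}=\{0\}$ the construction collapses to the asymptotic decay of $\xi_i(t)$ to the origin.
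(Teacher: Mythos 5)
Your proof is correct and rests on the same core ingredient as the paper's --- a common quadratic Lyapunov certificate $P_i\succ\mathbf{0}$ with $Q_i^\top P_iQ_i-P_i\prec\mathbf{0}$ for all $Q_i\in\mathcal{M}_{Z_i}^{K}$ --- but the construction of the invariant set is genuinely different and, frankly, more complete. The paper argues via the reachable-set recursion $\Xi_{i,t}$, asserts the nesting $\Xi_{i,t+1}\subseteq\Xi_{i,t}$ only ``for sufficiently small $\mathcal{P}_{\tilde{\delta}_i,t}$'', and then sets $\bar{\mathcal{P}}_i:=\Xi_{i,0}$, which is the singleton $\{\xi_i(0)\}$ and cannot literally contain the later reachable sets; the invariance claim is never verified against the definition in the lemma. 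You instead (a) extract a single uniform disturbance polytope $\tilde{\mathcal{P}}_{\tilde{\delta}_i}$ from the boundedness facts of Lemmas~\ref{lem:bound} and \ref{lem:auxerr:poly}, (b) build an explicit RPI ellipsoid of radius $R_i\ge r_i^\star/(1-\alpha_i)$ that also contains $\xi_i(0)$, and (c) sandwich a polytope between the contracted sublevel set and the ellipsoid, after which property (ii) is a one-line check and property (i) follows by induction through \eqref{eq:lem5}. This buys you an argument that needs no smallness assumption on the disturbance and that actually produces a polytope with the stated invariance; the cost is the extra bookkeeping of the sandwich step, and you inherit the same unproved premise the paper relies on, namely that the designed $K_i$ yields a \emph{common} Lyapunov matrix for the entire matrix polytope $\mathcal{M}_{Z_i}^{K}$ (which the paper attributes to Assumption~\ref{ABC} and you, more defensibly, to the SDP of Theorem~\ref{thm:k}). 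One cosmetic point: the closed convex hull $\overline{\mathrm{co}}\bigcup_{t}\mathcal{P}_{\tilde{\delta}_i,t}$ need not itself be a polytope, but since you only use its boundedness and the containments $\mathcal{P}_{\tilde{\delta}_i,t}\subseteq\tilde{\mathcal{P}}_{\tilde{\delta}_i}$, enclosing it in any bounded polytope fixes this without changing anything downstream.
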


\begin{proof}
The proof consists of two steps. In the first step, we compute the reachable set of $\bar{\mathcal{P}}_{\xi_i,t}$. In the second step, we ensure the stability of this reachable set by proving its invariance, which in turn guarantees the stability of $\bar{\mathcal{P}}_{\xi_i,t}$.

First, recalling Lemma~\ref{lem:auxerr:poly}, we refer to the bounded and compact set $\mathcal{P}_{\tilde{\delta}i,t}$ as the disturbance set. Building on this set, $\bar{\mathcal{P}}_{\xi_i,t}$, and $\mathcal{M}_{Z_i}^{K}$ defined in \eqref{eq:Mkz}, the reachable set for the virtual tracking error $\xi_i(t)$ at time $t$ can be formulated as
\begin{equation}\label{eq:Xi}
	\Xi_{i,t}=\left\{Q_i\Xi_{i,t-1}+\mathcal{P}_{\tilde{\delta}_i,t}: Q_i\in\mathcal{M}_{Z_i}^{K}\right\}
\end{equation}
with $\xi_i(0)=\Xi_{i,0}$. Hence, it follows from Lemma~\ref{lem:poly:approxi} that $\bar{\mathcal{P}}_{\xi_i,t}\subset\Xi_{i,t}$.

Next, we prove the stability of the reachable set $\Xi_{i,t}$.
Assumption \ref{ABC} states that the system is stabilizable, implying the existence of a positive definite symmetric matrix ${P}_i$ such that $Q_i^\top {P}_iQ-{P}_i\prec 0$ for all $Q_i\in \mathcal{M}_{Z_i}^K$.
Therefore, for sufficiently small $\mathcal{P}_{\tilde{\delta}_i,t}$, we have $\Xi_{i,t+1}\subseteq\Xi_{i,t}$ for $t\in\mathbb{N}$, ensuring the stability of the reachable trajectories $\Xi_{i,t}$.
Consequently, according to Lemma~\ref{lem:poly:approxi}, there exists an invariant set $\bar{\mathcal{P}}_i$ satisfying $\bar{\mathcal{P}}_i\supset\Xi_{i,t}\supset \bar{\mathcal{P}}_{\xi_i,t}$ for all $t\in\mathbb{N}$, with $\bar{\mathcal{P}}_i:=\Xi_{i,0}$.
The proof is complete.
\end{proof}

Lemma \ref{lem:stab} provides a stability guarantee for $\bar{\mathcal{P}}_{\xi_i,t}$, which also implies the stability of the virtual tracking error polytope ${\mathcal{P}}_{\xi_i,t}$ since $\mathcal{P}_{\xi_i,t}\subseteq \bar{\mathcal{P}}_{\xi_i,t}$. With this result in hand, we proceed to the problem of identifying a gain matrix $K_i$ for follower $i$. This gain matrix ensures that $A_i + B_iK_i$ is Schur stable for all $(A_i, B_i) \in \mathcal{M}_{Z_i}$. We derive a convex program, specifically a SDP, in the next theorem. This program aims to search for a stabilizing matrix $K_i$ based on  $(X_i,X_{i+},U_i,Y_i)$.  

\begin{thm}
\label{thm:k}

Consider the MAS \eqref{mas}-\eqref{leader} under the distributed data-driven feedback protocol \eqref{controller} and \eqref{observer} over graph $\bar{\mathcal{G}}$. Let Assumptions~\ref{graph}-\ref{rank} hold. Define $\Omega_i=X_{i+}-\sum_{k=1}^{\gamma_{w_i}\rho}\beta_{W,i}^{(k)}\hat{W}_{i}^{(k)}$.  Then, the following SDP is feasible, and the gain matrix $K_i=U_iM_i(X_iM_i)^{-1}$ with any $M_i\in\mathbb{R}^{\rho\times n_i}$ satisfying \eqref{sdp} renders $A_i+B_iK_i$ Schur stable for all $(A_i, B_i)\in \mathcal{M}_{Z_i}$
\begin{equation}
	\label{sdp}
	\begin{split}
		X_iM_i-\Omega_iM_i(X_iM_i)^{-1}(\Omega_iM_i)^\top&\succ0\\
		X_iM_i&\succ0.
	\end{split}
\end{equation}

\end{thm}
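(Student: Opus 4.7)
The plan is to prove the theorem in two parts: (i) show that \emph{any} feasible $M_i$ yields a stabilizing gain via a Lyapunov argument that turns the SDP into a discrete-time Lyapunov inequality, and (ii) show that feasibility itself follows from stabilizability (Assumption~\ref{ABC}) together with the data-richness condition (Assumption~\ref{rank}).

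First I would exploit the polytopic data representation from Lemma~\ref{representation}. For any $(A_i, B_i) \in \mathcal{M}_{Z_i}$, there exists some $W_i \in \mathcal{M}_{W_i}$ (equivalently, a choice of $\beta_{W,i}^{(k)}$) such that $[B_i~A_i]\bigl[\begin{smallmatrix} U_i \\ X_i \end{smallmatrix}\bigr] = X_{i+} - W_i = \Omega_i$. Given any $M_i$ with $X_i M_i \succ 0$, I would introduce the auxiliary matrix $G_i := M_i (X_i M_i)^{-1}$ and observe the key algebraic identity
\begin{equation*}
\begin{bmatrix} U_i \\ X_i \end{bmatrix} G_i = \begin{bmatrix} U_i M_i (X_i M_i)^{-1} \\ I \end{bmatrix} = \begin{bmatrix} K_i \\ I \end{bmatrix},
\end{equation*}
by the prescribed definition $K_i = U_i M_i (X_i M_i)^{-1}$. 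Consequently, $A_i + B_i K_i = [B_i~A_i]\bigl[\begin{smallmatrix} K_i \\ I \end{smallmatrix}\bigr] = \Omega_i G_i = \Omega_i M_i (X_i M_i)^{-1}$, a purely data-based expression for every closed-loop matrix in the polytopic family.

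Next I would verify Schur stability via the Lyapunov candidate $P := X_i M_i \succ 0$. The discrete-time Lyapunov condition $(A_i + B_i K_i)\, P \,(A_i + B_i K_i)^\top \prec P$ becomes, after substituting the expression above and cancelling $(X_i M_i)^{-1} (X_i M_i)(X_i M_i)^{-1}$, exactly $\Omega_i M_i (X_i M_i)^{-1} (\Omega_i M_i)^\top \prec X_i M_i$, which is the first inequality of \eqref{sdp}. Hence every feasible $M_i$ certifies Schur stability of $A_i + B_i K_i$ at that particular $\Omega_i$. To extend this to the entire polytope $\mathcal{M}_{Z_i}$ with a single gain, I would rewrite \eqref{sdp} via Schur complement as the LMI $\bigl[\begin{smallmatrix} X_iM_i & \Omega_i M_i \\ (\Omega_i M_i)^\top & X_iM_i \end{smallmatrix}\bigr] \succ 0$, which is \emph{affine} in $\Omega_i$; robust feasibility over the polytope then follows from feasibility at the finitely many vertices $\hat W_i^{(k)}$.

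For feasibility, I would invoke Assumption~\ref{ABC} to pick a stabilizing gain $\bar K_i$ and Lyapunov matrix $\bar P_i \succ 0$ for the (unknown) true pair $(\bar A_i, \bar B_i)$. Assumption~\ref{rank} guarantees that $\bigl[\begin{smallmatrix} U_i \\ X_i \end{smallmatrix}\bigr]$ has full row rank, so the linear system $\bigl[\begin{smallmatrix} U_i \\ X_i \end{smallmatrix}\bigr] M_i^\star = \bigl[\begin{smallmatrix} \bar K_i \bar P_i \\ \bar P_i \end{smallmatrix}\bigr]$ admits a solution $M_i^\star$ with $X_i M_i^\star = \bar P_i$ and $U_i M_i^\star (X_i M_i^\star)^{-1} = \bar K_i$. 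Evaluated at the true noise realization $\bar W_i$, the corresponding $\bar \Omega_i$ yields $\Omega_i M_i^\star (X_i M_i^\star)^{-1} (\Omega_i M_i^\star)^\top = (\bar A_i + \bar B_i \bar K_i) \bar P_i (\bar A_i + \bar B_i \bar K_i)^\top \prec \bar P_i$, proving strict feasibility at the nominal vertex; a continuity/margin argument then shows that the same $M_i^\star$ remains feasible throughout $\mathcal{M}_{W_i}$ when the noise polytope is sufficiently concentrated.

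The main obstacle I anticipate is precisely this last step: the theorem's feasibility claim is uniform over the entire polytope, whereas stabilizability only guarantees feasibility at the (unknown) true vertex. The cleanest route is the Schur-complement reformulation, because it converts the problem into a finite collection of LMIs indexed by the vertices of $\mathcal{M}_{W_i}$, allowing standard robust-LMI arguments; care is needed, however, to argue that the stability margin afforded by $(\bar K_i, \bar P_i)$ is large enough to absorb the perturbations introduced by noise vertices different from the true realization, which implicitly requires a bound on the noise size relative to the stability margin. All remaining steps (algebraic manipulation, Schur complement, Lyapunov inequality) are routine once the parameterization $A_i + B_i K_i = \Omega_i M_i (X_i M_i)^{-1}$ is in place.
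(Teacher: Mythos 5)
Your proposal follows essentially the same route as the paper's proof: you parameterize the closed loop as $A_i+B_iK_i=\Omega_i M_i(X_iM_i)^{-1}$ via the identity $\bigl[\begin{smallmatrix}K_i\\ I\end{smallmatrix}\bigr]=\bigl[\begin{smallmatrix}U_i\\ X_i\end{smallmatrix}\bigr]G_i$ with $G_i=M_iP_i$ and $P_i=(X_iM_i)^{-1}$, and read \eqref{sdp} as a data-based Lyapunov inequality (the paper uses the primal form $(A_i+B_iK_i)^\top P_i(A_i+B_iK_i)\prec P_i$ followed by a Schur complement; you use the dual form directly, which is equivalent). The one point you flag as a potential gap---that stabilizability of the true pair only gives feasibility at the true noise realization, so uniform feasibility over all of $\mathcal{M}_{Z_i}$ needs a vertex/small-noise argument---is in fact present in the paper as well, which simply asserts (via the proof of Lemma~\ref{lem:stab}) a common Lyapunov matrix for the whole polytope; your explicit vertex-enumeration and margin discussion is, if anything, the more careful treatment.
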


\begin{proof}
	Consider any matrix $K_i$ that makes $A_i+B_iK_i$ Schur stable. Based on the proof of Lemma \ref{lem:stab}, for all $(A_i, B_i)\in \mathcal{M}_{Z_i}$, there exists a matrix $P_i\succ 0$ such that
\begin{equation}
	\label{Lyap}
	(A_i+B_iK_i)^\top P_i(A_i+B_iK_i)-P_i\prec 0.
\end{equation}

According to Assumption~\ref{rank}, any vector of length $n_i+m_i$ can be expressed as a linear combination of the data matrix. That is, there exists a matrix $G_{i}\in\mathbb{R}^{\rho\times n_i}$ satisfying
\begin{equation}
	\label{XG=I}
	\left[\begin{matrix}K_i \\I\end{matrix}\right]=\left[\begin{matrix}U_i \\X_i\end{matrix}\right]G_i
\end{equation}
which implies
\begin{equation}
	\label{AB3}
	\begin{split}
		A_i+B_iK_i&=Z_i\left[\begin{matrix}U_i \\X_i\end{matrix}\right]G_i\\
		&=\Big(X_{i+}-\sum_{k=1}^{\gamma_{w_i}\rho}\beta_{W,i}^{(k)}\hat{W}_{i}^{(k)}\Big)G_i
	\end{split}
\end{equation}
with the second identity following from \eqref{AB2}.
Substituting \eqref{AB3} into \eqref{Lyap}, we have
\begin{equation}
\label{eq:xpx}
	\begin{split}
		\Big[\Big(X_{i+}&-\sum_{k=1}^{\gamma_{w_i}\rho}\beta_{W,i}^{(k)}\hat{W}_{i}^{(k)}\Big)G_i\Big]^\top P_i\Big[\Big(X_{i+}\\
	&-\sum_{k=1}^{\gamma_{w_i}\rho}\beta_{W,i}^{(k)}\hat{W}_{i}^{(k)}\Big)G_i\Big]-P_i\prec0.
	\end{split}
\end{equation}

Let $G_i=M_iP_i$ and $\Omega_i=X_{i+}-\sum_{k=1}^{\gamma_{w_i}\rho}\beta_{W,i}^{(k)}\hat{W}_{i}^{(k)}$. 
By a Schur complement argument, condition \eqref{eq:xpx} implies
\begin{equation}
	\label{schur}
		P_i^{-1}-\Omega_iM_iP_i(\Omega_iM_i)^{\top}\succ0.
\end{equation}
Recall from \eqref{XG=I} that $X_iG_i=I$ and $K_i=U_iG_i$. By exploiting $P_i=(X_iM_i)^{-1}$, seeking stability is equivalent to finding a matrix $M_i$ such that
\begin{equation*}
		X_iM_i-\Omega_iM_i(X_iM_i)^{-1}(\Omega_iM_i)^{\top}\succ0
\end{equation*}
with the choice $K_i=U_iM_i(X_iM_i)^{-1}$.
Then, it can be concluded that $A_i+B_iK_i$ is stable as $P_i$ is symmetric and $P_i\succ 0$ for all $(A_i,\, B_i)\in \mathcal{M}_{Z_i}$.
\end{proof}


Before presenting our stability result, we define the tracking error $e_i(t)$ of agent $i$ as $e_i(t):=y_i(t)-y_0(t)$. Our objective is to analyze the convergence of the tracking error, ensuring that the induced closed-loop system exhibits desired stability properties. Leveraging previous results, including the optimization problem~\eqref{eq:opt}, Lemmas \ref{lem:bound}-\ref{lem:stab}, and Theorem~\ref{thm:k}, we present our distributed data-driven output synchronization solution for unknown heterogeneous MASs \eqref{mas}-\eqref{leader} in Algorithm~\ref{Alg}, accompanied by the stability guarantees detailed below.

\begin{algorithm}[!htb]
	\caption{Distributed data-driven output synchronization}
	\label{Alg}
	\begin{algorithmic}[1]
		\State \textbf{Input:} desired lifespan of the MAS $\mathcal{T}$; initial states of the leader $x_0(0)\in \mathbb{R}^{n_0}$ and follower $x_i(0) \in \mathbb{R}^{n_i}$; noise polytopes $\mathcal{P}_{w_i}$ and $\mathcal{P}_{v_i}$; regulator equation error polytopes $\mathcal{M}_{\Delta{i1}}$ and $\mathcal{M}_{\Delta{i2}}$;
		state-input-output data $\mathbb{D}_i$ for each $ i\in \{1,2,\ldots,N\}$.
		\State \textbf{Construct} matrix polytopes $\mathcal{M}_{W_{i}}$ and $\mathcal{M}_{V{i}}$ via \eqref{M_poly}.
		\State \textbf{Compute} data-based polytopic representation $\mathcal{M}_{i}$ using Lemma \ref{representation}.
		\State \textbf{Solve} Problem \eqref{eq:opt} to obtain feasible solutions $\Pi_i$ and $\Gamma_i$.
		\State \textbf{Design} the controller gain matrix $K_i$ and observer gain matrix $F$ based on Theorem \ref{thm:k}.
		\While {$t<\mathcal{T}$}
		\For{$i=1,2,\ldots, N$}
		\State{Broadcast $\eta_i(t)$ and $x_i(t)$ to agent $j\in \mathcal{N}_i$;}
		\State {Compute $z_i(t)$ from \eqref{eq:auxerr:dyn} using the updated state $\eta_j(t)$ and $x_j(t)$, $j\in \mathcal{N}_i$;}
		\State{Update the feedback control protocol \eqref{controller}, \eqref{observer}, and the dynamics \eqref{mas} of agent $i$.}
		\EndFor
		\EndWhile
	\end{algorithmic}
\end{algorithm}

\begin{thm}
\label{thm:uub}
Consider the MAS described by \eqref{mas}-\eqref{leader} and the graph $\bar{\mathcal{G}}$. Suppose Assumptions~\ref{graph}-\ref{rank} hold. 
Denoting the optimal solution of Problem \eqref{eq:opt} by $(\Pi_i^*,\Gamma_i^*)$, the tracking errors $e_i(t)$ are ultimately uniformly bounded (UUB) under the distributed data-driven feedback protocol \eqref{controller} and \eqref{observer} for any initial state and all $i\in \{1,2,\ldots,N\}$, if the following two conditions are satisfied:
 \begin{enumerate}
 	\item The controller gain $K_i$ is designed as in Theorem~\ref{thm:k}.
 	\item Choose matrix $F$ such that $I_N\otimes S- (I_N+D+G)^{-1}(\mathcal{L}+G)\otimes F$ is Schur stable.
 \end{enumerate}
\end{thm}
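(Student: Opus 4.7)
The plan is to decompose the tracking error $e_i(t) = y_i(t) - y_0(t) = \bar{C}_i x_i(t) - H x_0(t)$ into three parts and show each is ultimately uniformly bounded. Using the definition $\xi_i(t) = x_i(t) - \Pi_i^* \eta_i(t)$, the identity $\delta_i(t) = \eta_i(t) - x_0(t)$, and the relaxed regulator equation $\bar{C}_i \Pi_i^* = H + \Delta_{i2}^*$ with $\Delta_{i2}^* \in \mathcal{M}_{\Delta_{i2}}$ from Lemma~\ref{lem:bound}, I would rewrite
\begin{equation*}
e_i(t) = \bar{C}_i \xi_i(t) + H \delta_i(t) + \Delta_{i2}^* \eta_i(t),
\end{equation*}
and then bound each of the three summands.

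For the second summand, condition~2 of the theorem makes the matrix driving the stacked observer error $\delta(t)$ in \eqref{eq:esti:entire} Schur stable, so $\delta_i(t) \to 0$ exponentially; in particular $\|H\delta_i(t)\|$ is UUB (vanishing). For the third summand, I would invoke Assumption~\ref{pole}: the poles of $S$ lie on the unit circle and are non-repeated, so $x_0(t)$ is bounded for all $t$, which together with $\delta_i(t)\to 0$ yields boundedness of $\eta_i(t) = x_0(t)+\delta_i(t)$; combined with $\Delta_{i2}^*\in\mathcal{M}_{\Delta_{i2}}$ (a bounded polytope), this gives a uniform bound on $\|\Delta_{i2}^*\eta_i(t)\|$ proportional to the noise level. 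For the first summand, I would appeal to Lemma~\ref{lem:stab}: under condition~1 of the theorem, Theorem~\ref{thm:k} ensures $A_i+B_i K_i$ is Schur stable for every $(A_i,B_i)\in\mathcal{M}_{Z_i}$, so the hypothesis of Lemma~\ref{lem:stab} holds and there is an invariant bounded polytope $\bar{\mathcal{P}}_i$ with $\mathcal{P}_{\xi_i,t}\subseteq \bar{\mathcal{P}}_{\xi_i,t}\subset \bar{\mathcal{P}}_i$; hence $\|\bar{C}_i\xi_i(t)\|$ is UUB. Summing the three bounds yields the claim.

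The main obstacle I anticipate is verifying that the disturbance term $\tilde{\delta}_i(t) = \Delta_{i1}^*(x_0(t)+\delta_i(t)) + (1+d_i+g_i)^{-1}\Pi_i^* F z_i(t)$ driving the virtual tracking-error dynamics \eqref{eq:recon:aux} is itself uniformly bounded, so that Lemma~\ref{lem:stab} can be applied cleanly. This hinges on chaining together the boundedness of $x_0(t)$ (from Assumption~\ref{pole}), the exponential decay of $\delta_i(t)$ and therefore of $z_i(t) = \sum_j a_{ij}(\delta_i(t)-\delta_j(t)) + g_i\delta_i(t)$ (from condition~2), and the membership $\Delta_{i1}^*\in\mathcal{M}_{\Delta_{i1}}$ (from Lemma~\ref{lem:bound}), all taken uniformly in $t$. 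Once $\tilde{\delta}_i(t)$ is shown to lie in a uniformly bounded polytope $\mathcal{P}_{\tilde\delta_i,t}$, the invariance argument of Lemma~\ref{lem:stab} delivers a uniform bound on $\xi_i(t)$, and the decomposition above converts that into a UUB tracking error whose ultimate radius scales with the noise bounds $\bar W_i$, $\bar V_i$. In particular, when the noise vanishes one recovers $\Delta_{i1}^*=\Delta_{i2}^*=0$ and the bound collapses to zero, recovering exact output synchronization in line with Remark~\ref{rem:regulator}.
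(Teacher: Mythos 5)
Your proposal is correct and follows essentially the same route as the paper's proof: the same decomposition of $e_i(t)$ via $\bar{C}_i\Pi_i^*=H+\Delta_{i2}^*$ (your three-term split is an algebraic rearrangement of the paper's expression \eqref{eq:E1}), the same use of observer convergence under condition~2 and boundedness of $x_0(t)$ under Assumption~\ref{pole}, and the same appeal to Theorem~\ref{thm:k} together with Lemmas~\ref{lem:auxerr:poly}--\ref{lem:stab} to confine $\xi_i(t)$ to a bounded invariant polytope. The only difference is cosmetic: the paper additionally writes out the explicit limiting reachable polytope $\mathcal{P}_{e_i,t}$ in \eqref{eq:track:poly2} to exhibit how the ultimate bound scales with $\mathcal{M}_{\Delta_{i1}}$ and $\mathcal{M}_{\Delta_{i2}}$, whereas you conclude UUB by summing the three bounds.
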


\begin{proof}
It can be observed that the solutions $\Pi_i$ and $\Gamma_i$ to the optimization problem \eqref{eq:opt} satisfy \eqref{eq:regu:noise} for all $(A_i,B_i,C_i)\in \mathcal{M}_{i}$.
Considering the true system matrices $(\bar{A}_i,\bar{B}_i,\bar{C}_i)\in \mathcal{M}_{i}$, we deduce from \eqref{mas}, \eqref{leader}, and \eqref{eq:regu:noise} that the tracking error $e_i(t)$ can be expressed as
\begin{equation}
	\label{eq:E1}
		e_i(t)=\bar{C}_ix_i(t)-(\bar{C}_i\Pi_i^*-\Delta_{i2}^*)x_0(t).
\end{equation}

Moreover, based on \eqref{eq:esti:entire}, if we select a stabilizing matrix $F$ that satisfies the condition $I_N\otimes S- (I_N+D+G)^{-1}(\mathcal{L}+G)\otimes F$ being Schur stable, the observer's state asymptotically converges to the leader's state, i.e., $\lim_{t\to \infty}\delta_i(t)=\eta_i(t)-x_0(t)= 0$. Therefore, as $t\rightarrow \infty$, we have
\begin{equation}\label{eq:track:infty}
	\lim_{t\to \infty}e_i(t)=\lim_{t\to \infty}\bar{C}_i\xi_i(t)+\Delta_{i2}^*x_0(t),
\end{equation}
where the fact that $\xi_i(t):=x_i(t)-\Pi_i^*\eta_i(t)$ has been used.

Furthermore, we can recursively obtain from \eqref{eq:lem5} that for $t\in\mathbb{N}$, the polytope $\bar{\mathcal{P}}_{\xi_i,t}$ satisfies
\begin{align}
	\bar{\mathcal{P}}_{\xi_i,t}=(\mathcal{M}_{Z_i}^{K})^t\bar{\mathcal{P}}_{\xi_i,0}+\sum_{\ell=0}^{t-1}(\mathcal{M}_{Z_i}^{K})^\ell\mathcal{P}_{\tilde{\delta}_i,t-\ell-1}
\end{align}

As $\lim_{t\to \infty}\delta_i(t)= 0$, it follows that $\lim_{t\to \infty}z_i(t)=0$. Consequently, as $t\rightarrow \infty$, the set $\bar{\mathcal{P}}_{\xi_i,t}$ converges to the following set:
\begin{equation}\label{eq:barP}
	(\mathcal{M}_{Z_i}^{K})^t\bar{\mathcal{P}}_{\xi_i,0}+\sum_{\ell=0}^{t-1}(\mathcal{M}_{Z_i}^{K})^\ell\Delta_{i1}^*\mathcal{P}_{x_0}.
\end{equation}

Based on Lemmas~\ref{representation}-\ref{lem:stab}, it can be concluded that matrices $\bar{C}_i$, $\Delta_{i1}$, and $\Delta_{i2}$, as well as the sequences $\xi_i(t)$ and $x_0(t)$, can be bounded and constrained within compact polytopes, respectively. Specifically, we have $\bar{C}_i\in\mathcal{M}_{C_i}$, $\Delta_{i1}^*\in\mathcal{M}_ {\Delta_{i1}}$, $\Delta_{i2}^*\in\mathcal{M}_{\Delta_{i2}}$, $\xi_i(t)\in \mathcal{P}_{\xi_i,t}\subseteq \bar{\mathcal{P}}_{\xi_i,t}$, and $x_0(t)\in\mathcal{P}_{x_0}$ for all $t\in \mathbb{N}$.
By combining \eqref{eq:track:infty} and \eqref{eq:barP}, we can compute the reachable set of the tracking error $e_i(t)$ at time $t$ (as $t\rightarrow \infty$) as follows:
\begin{align}
	\mathcal{P}_{e_i,t}&:=\mathcal{M}_{C_i}(\mathcal{M}_{Z_i}^{K})^t\bar{\mathcal{P}}_{\xi_i,0}\nonumber \\
	&~\quad +\big[\mathcal{M}_{C_i}\sum_{\ell=0}^{t-1}(\mathcal{M}_{Z_i}^{K})^\ell\mathcal{M}_{\Delta_{i1}}+\mathcal{M}_	{\Delta_{i2}}\big]\mathcal{P}_{x_0}\label{eq:track:poly}.
\end{align}

According to Theorem~\ref{thm:k}, the controller gain matrix $K_i=U_iM_i(X_iM_i)^{-1}$ renders ${A}_i+{B}_iK_i$ Schur stable for all $(A_i,\, B_i)\in \mathcal{M}_{Z_i}$, which implies the existence of $P_i:=(X_iM_i)^{-1}$ such that any $Q_i\in \mathcal{M}_{Z_i}^K$ is also Schur stable.
This means that there exist constants $\mu>0$ and  $\beta=\max_{Q_i\in \mathcal{M}_{Z_i}^K}\lambda_{\min}(Q_i)\in(0,1)$ such that $(\mathcal{M}_{Z_i}^K)^t\bar{\mathcal{P}}_{\xi_i,0}\subseteq\mu\beta^t\bar{\mathcal{P}}_{\xi_i,0}$ holds.
Consequently, \eqref{eq:track:poly} obeys
\begin{equation}\label{eq:track:poly2}
	\begin{split}
		\mathcal{P}_{e_i,t}&\subseteq \mu\beta^t\mathcal{M}_{C_i}\bar{\mathcal{P}}_{\xi_i,0}\\
		&\quad+\Big[\mu\mathcal{M}_{C_i}\sum_{\ell=0}^{t-1}(\mathcal{M}_{Z_i}^K)^\ell\mathcal{M}_{\Delta_{i1}}+\mathcal{M}_{\Delta_{i2}}\Big]\mathcal{P}_{x_0}.
	\end{split}
\end{equation}
Hence, as $t\rightarrow \infty$, $e_i(t)$ converges to an adjustable bounded set as \eqref{eq:track:poly2}.

Therefore, it can be concluded that the tracking error $e_i(t)$ is UUB for all $i\in\{1,2,\ldots,N\}$ when considering the proposed distributed data-driven feedback protocol given by \eqref{controller} and \eqref{observer}. This property enables the outputs of all followers to approximately synchronize with the output of the leader.

\end{proof}

\begin{remark}[$\Delta$-optimal output synchronization]
	The proposed distributed data-driven feedback control protocol \eqref{controller}--\eqref{observer} achieves $\Delta$-optimal output synchronization for the leader-following MAS \eqref{mas}--\eqref{leader} with unknown system matrices. This result, presented in Theorem~\ref{thm:uub}, addresses Problem \ref{problem1} effectively. Specifically, for any $(A_i,B_i,C_i)\in \Sigma_i$ with $i\in\{1,2,\ldots,N\}$, the tracking error $e_i(t)$ converges to a bounded and compact reachable set as given in \eqref{eq:track:poly2}. The size of this set is influenced by the size of noise polytopes $\mathcal{M}_{\Delta_{i1}}$ and $\mathcal{M}_{\Delta_{i2}}$, as defined in \eqref{eq:poly:delta}. Notably, the magnitude of tracking error $e_i(t)$ is positively correlated with the vertices of noise polytopes $\mathcal{P}_{w_i}$ and $\mathcal{P}_{v_i}$, corresponding to the noise levels $w_i(T)$ and $v_i(T)$, respectively. 
	It is interesting that when $w_i(T)=0$ and $v_i(T)=0$, the regulator equation errors $\Delta_{i1}=0$ and $\Delta_{i2}=0$, resulting in the virtual tracking error $\xi_i(t)$ asymptotically converge to zero, as shown in Lemma~\ref{lem:stab}. In this case, the tracking error $e_i(t)$ asymptotically converges to zero too.

\end{remark}

Based on the above remark, we can derive the following corollary for the noise-free case.
	
\begin{corollary}
	Consider the leader-following MAS \eqref{masnoisy} with $w_i(T)=0$ and $v_i(T)=0$. Under the same conditions as in Theorem~\ref{thm:uub}, the tracking error $e_i(t)$ asymptotically converges to zero for all $(A_i,B_i,C_i)\in \Sigma_i$ and $i\in \{1,2,\ldots,N\}$.
\end{corollary}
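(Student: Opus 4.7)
The plan is to show that in the noise-free case, every inequality and reachable-set inclusion used in the proof of Theorem~\ref{thm:uub} collapses to an equality with a vanishing disturbance term, so that the ultimate bound becomes zero and convergence is asymptotic. The corollary is essentially a specialization of Theorem~\ref{thm:uub}, so I would not redo the Lyapunov/SDP argument; I would only trace how the quantities shrink when $w_i(T)\equiv 0$ and $v_i(T)\equiv 0$.

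First, I would observe that with $w_i(T)=0$ and $v_i(T)=0$ for all $T\in\{0,\ldots,\rho-1\}$, the vertex matrices $\hat{W}_i^{(k)}$ and $\hat{V}_i^{(k)}$ in \eqref{eq:whati} are all zero, so $\mathcal{M}_{W_i}=\{\mathbf{0}\}$ and $\mathcal{M}_{V_i}=\{\mathbf{0}\}$. Consequently, by Lemma~\ref{representation}, the polytope $\mathcal{M}_i$ degenerates and the consistent system set $\Sigma_i$ coincides with the single-point set $\{(\bar{A}_i,\bar{B}_i,\bar{C}_i)\}$, where these matrices are uniquely recovered via the closed-form expressions in \eqref{eq:rep:true}. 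Therefore, for every $(A_i,B_i,C_i)\in\Sigma_i$, we have $A_i=\bar A_i$, $B_i=\bar B_i$, $C_i=\bar C_i$.

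Next, I would examine the output regulation problem \eqref{eq:opt}. Because $\mathcal{M}_{W_i}$ and $\mathcal{M}_{V_i}$ are trivial, the two terms of the objective reduce to $\|\bar A_i\Pi_i+\bar B_i\Gamma_i-\Pi_i S\|+\|\bar C_i\Pi_i-H\|$, which under Assumptions~\ref{ABC}--\ref{pole} admits a solution with zero value, namely the classical regulator solution $(\Pi_i,\Gamma_i)$ of \eqref{regulator}. Hence the optimal $(\Pi_i^*,\Gamma_i^*)$ attains cost zero, and Lemma~\ref{lem:bound} together with \eqref{eq:poly:delta} forces $\Delta_{i1}^*=0$ and $\Delta_{i2}^*=0$. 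Equivalently, the polytopes $\mathcal{M}_{\Delta_{i1}}$ and $\mathcal{M}_{\Delta_{i2}}$ collapse to $\{\mathbf 0\}$.

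With this in hand, I would revisit the reachable-set bound \eqref{eq:track:poly2} obtained in the proof of Theorem~\ref{thm:uub}. The disturbance term $\mathcal{P}_{\tilde{\delta}_i,t}=\Delta_{i1}^*\mathcal{P}_{x_0}+\Delta_{i1}^*\delta_i(t)+\Pi_i^*(1+d_i+g_i)^{-1}F z_i(t)$ loses its $\Delta_{i1}^*$-contributions and is driven solely by $z_i(t)$, which by the Schur stability of $I_N\otimes S-(I_N+D+G)^{-1}(\mathcal L+G)\otimes F$ (Condition~2) satisfies $\lim_{t\to\infty}z_i(t)=0$. Substituting $\mathcal{M}_{\Delta_{i1}}=\mathcal{M}_{\Delta_{i2}}=\{\mathbf 0\}$ into \eqref{eq:track:poly2} leaves $\mathcal{P}_{e_i,t}\subseteq \mu\beta^t\bar C_i\bar{\mathcal P}_{\xi_i,0}+\bar C_i\sum_{\ell=0}^{t-1}(\mathcal M_{Z_i}^K)^\ell\Pi_i^*(1+d_i+g_i)^{-1}F z_i(t-\ell-1)$, in which the first summand decays geometrically by Schur stability of $\mathcal M_{Z_i}^K$ (Theorem~\ref{thm:k}, Condition~1) and the second is a stable convolution of a vanishing input. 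Standard input-to-state arguments then give $\lim_{t\to\infty}e_i(t)=0$ for all $i\in\{1,2,\ldots,N\}$.

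The only mildly delicate step is justifying that the optimal value of the data-driven problem \eqref{eq:opt} is truly zero in the noise-free case: I would argue this by exhibiting the exact regulator pair $(\Pi_i,\Gamma_i)$ from \eqref{regulator} as a feasible point (which exists under Assumptions~\ref{ABC}--\ref{pole}), combined with the fact that both terms of the objective are nonnegative norms. Once the objective is zero, the error polytopes $\mathcal{M}_{\Delta_{i1}}$ and $\mathcal{M}_{\Delta_{i2}}$ in \eqref{eq:poly:delta} must vanish, and the remainder of the proof is a direct specialization of Theorem~\ref{thm:uub}. No new technical machinery is required beyond the results already established.
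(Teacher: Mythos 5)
Your proposal is correct and follows essentially the same route as the paper, which justifies the corollary only via the preceding remark: zero noise forces $\Delta_{i1}^*=\Delta_{i2}^*=0$, so the disturbance polytopes in the bound \eqref{eq:track:poly2} collapse and the tracking error converges asymptotically. Your write-up is in fact more careful than the paper's one-line argument (e.g., noting that $\Sigma_i$ becomes a singleton under Assumption~\ref{rank} and that the exact regulator pair is a zero-cost feasible point of \eqref{eq:opt}), but no new ideas are introduced.
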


\begin{remark}[Solvability]
	It is important to highlight that the solution to Problem \eqref{eq:opt} and the SDP \eqref{sdp} only needs to consider the vertices of the matrix polytopes $\mathcal{M}_{Z_i}$ and $\mathcal{M}_{C_i}$. This is due to the convexity of the polytope, which implies that any point within the polytope can be expressed as a convex combination of its vertices. Therefore, if the problem is addressed at all vertices of the polytope, it covers the entire polytope, including its interior.  
\end{remark}

\begin{remark}[Comparison]
	Several existing approaches have explored data-driven output synchronization for unknown MASs, e.g., the behavioral approach in \cite{Jiao2021} and the model-free RL-based approach in \cite{Kiumarsi2017,Jiang2023rl,Chen2023}.
	In comparison to these existing works, our approach exhibits the following key distinctions.
	Firstly, the proposed data-driven method is robust to unknown noisy data, eliminating the requirement for accurately measurable noise as seen in \cite{Jiao2021}.
	Secondly, in contrast to a large amount training data required for \cite{Jiang2023rl,Chen2023}, our approach achieves output synchronization only using limited data, i,e., as long as Assumption~\ref{rank} is satisfied.
	Furthermore, we propose a static data-driven design method using historical data, while ensuring system stability, which circumvents real-time iterative computations using online data, as in \cite{Jiang2023rl,Chen2023}.  
	In this sense,  the proposed method significantly reduces the computational burden, making it more efficient and suitable for real-world implementation.
\end{remark}


\section{Numerical Examples}
\label{section4}
In this section, we present a numerical example to demonstrate the effectiveness of the proposed data-driven method. We consider a discrete-time heterogeneous MAS consisting of seven agents, including one leader and six followers. The dynamics of the leader are described by \eqref{leader}, where
$$S=\left[\begin{matrix}0&1 \\-1 &0\end{matrix}\right], \quad H=\left[\begin{matrix}1&0 \end{matrix}\right].$$
The true dynamics of the six followers are given by \eqref{mas} with
\begin{align*}
	&\bar{A}_1=2,\bar{B}_1=3,\bar{C}_1=1\\
	&\bar{A}_2=\left[\begin{matrix}0&1 \\1 &-1\end{matrix}\right],\bar{B}_2=\left[\begin{matrix}0\\1\end{matrix}\right],\bar{C}_2=\left[\begin{matrix}1&1 \end{matrix}\right]\\
	&\bar{A}_3=\left[\begin{matrix}0&1 \\1 &-2\end{matrix}\right],\bar{B}_3=\left[\begin{matrix}1\\0\end{matrix}\right],\bar{C}_3=\left[\begin{matrix}0&1 \end{matrix}\right]\\
	&\bar{A}_4=\left[\begin{matrix}0&1 \\-1 &-3\end{matrix}\right], \bar{B}_4=\left[\begin{matrix}1\\1\end{matrix}\right], \bar{C}_4=\left[\begin{matrix}-1&1 \end{matrix}\right]\\
	&\bar{A}_5=\left[\begin{matrix}0&1&0 \\0 &0&1\\0&0&-4\end{matrix}\right],\bar{B}_5=\left[\begin{matrix}0\\0\\4\end{matrix}\right],\bar{C}_5=\left[\begin{matrix}1&0&0 \end{matrix}\right]\\
	&\bar{A}_6=\left[\begin{matrix}0&1&0 \\0 &0&1\\0&0&-5\end{matrix}\right],\bar{B}_6=\left[\begin{matrix}0\\0\\5\end{matrix}\right],\bar{C}_6=\left[\begin{matrix}2&0&0 \end{matrix}\right].
\end{align*}
The network topology of these agents is shown in Fig.~\ref{fig:graph}, which represents the communication topology $\bar{\mathcal{G}}$ between agents.
\begin{figure}[t]
	\centering
	\includegraphics[scale=0.65]{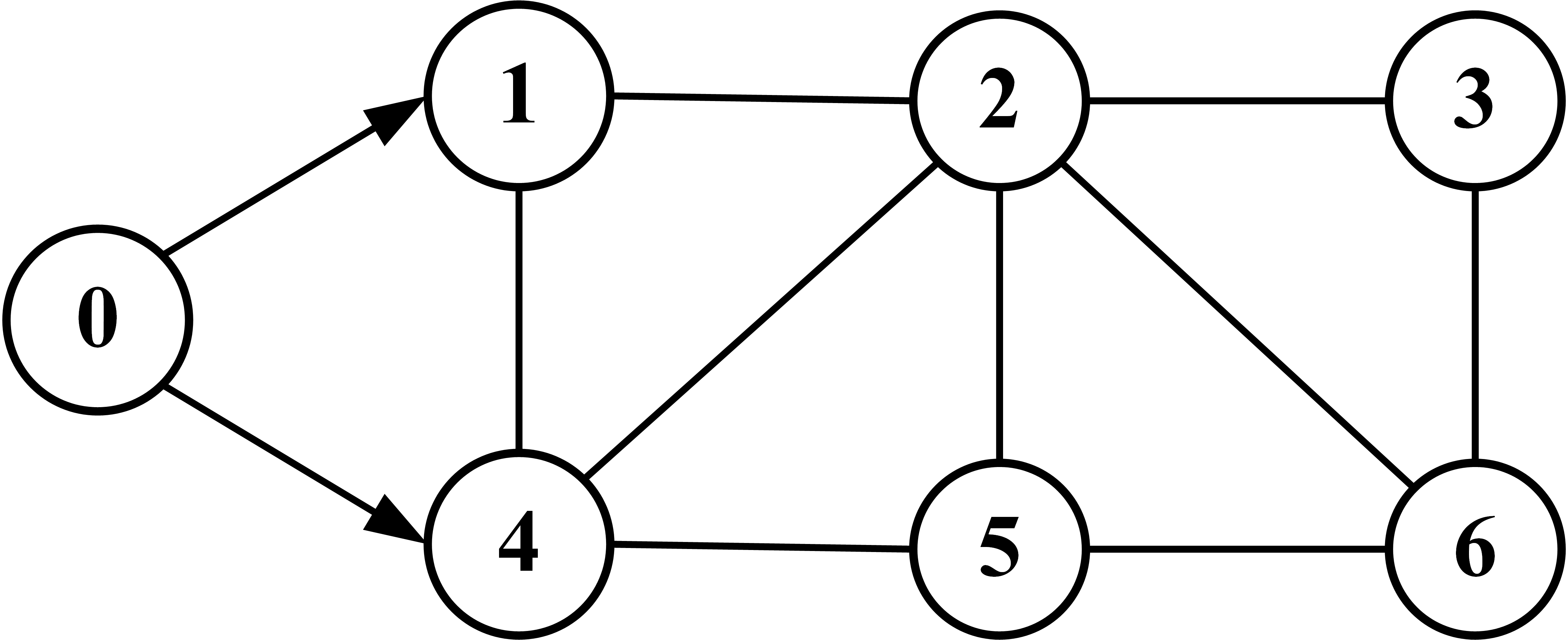}
\caption{The communication topology $\bar{\mathcal{G}}$ between agents.}
	\label{fig:graph}
\end{figure}

In the data-driven setting, the true system matrices $(\bar{A}_i,\bar{B}_i,\bar{C}_i)$ are assumed unknown. To collect data for each follower, we run the open-loop system \eqref{masnoisy} offline and gather a set of noisy data $\mathbb{D}_i$ with a length of $\rho=20$. The inputs are uniformly distributed within a polytope $\mathcal{P}_{u_i}$ defined as
 $$\mathcal{P}_{u_i}=\Big\{u_i\Big|u_i=\sum_{k=1}^{\gamma_{u_i}}\beta^{(k)}_{u,i}\hat{u}_{i}^{(k)},\beta^{(k)}_{u,i}\geq0, \sum_{k=1}^{\gamma_{u_i}}\beta^{(k)}_{u,i}=1 \Big\}$$
 where $\gamma_{u_i}=2$ and the two vertices are $\hat{u}_{i}^{(1)}=1$ and $\hat{u}_{i}^{(2)}=-1$ for $i\in\{1,2,\ldots,6\}$. Furthermore, the random process noise $w_i(T)$ is sampled from the polytope $\mathcal{P}_{w_i}$ with four vertices:
 $\hat{w}_i^{(1)}=\bar{w}_i*[1\, 1]^\top$,
$\hat{w}_i^{(2)}=\bar{w}_i*[1\, -1]^\top$,
$\hat{w}_i^{(3)}=\bar{w}_i*[-1\, 1]^\top$, and 
$\hat{w}_i^{(4)}=\bar{w}_i*[-1\, -1]^\top$.
 Similarly, the measurement noise $v_i(T)$ is bounded by $\mathcal{P}_{v_i}$ with two vertices: $\hat{v}_i^{(1)}=\bar{v}_i$ and 
$\hat{v}_i^{(2)}=-\bar{v}_i$. Here, select $\bar{w}_i=\bar{v}_i=0.01$.
 
To solve the output regulation equations \eqref{eq:regu:noise}, we need to compute the optimization problem \eqref{eq:opt} to determine the values of $(\Pi_i^*,\Gamma_i^*)$.
\begin{equation*}
\begin{split}
&\Pi_1^*=\left[\begin{matrix}1.0000&-0.0000\end{matrix}\right],\quad\,\Gamma_1^*=\left[\begin{matrix}-0.6672& 0.3336\end{matrix}\right],\\
	&\Pi_2^*=\left[\begin{matrix}0.5013&-0.4995 \\0.4987 &0.4995\end{matrix}\right],\quad\Gamma_2^*=\left[\begin{matrix}-0.5014&1.4994\end{matrix}\right],\\
	&\Pi_3^*=\left[\begin{matrix}1.9914&0.9954 \\0.9952 &-0.0020\end{matrix}\right],\quad\,
\Gamma_3^*=\left[\begin{matrix}-1.9907&1.9898\end{matrix}\right],\\
&\Pi_4^*=\left[\begin{matrix}-0.7995&-0.1998 \\0.1996 &-0.2002\end{matrix}\right],\;\,\Gamma_4=\left[\begin{matrix}-0.0002&-0.6007\end{matrix}\right],\\
&\Pi_5^*=\left[\begin{matrix}0.9999&0.0008\\ -0.0010&1.0010\\-1.0001&-0.0011\end{matrix}\right],\;\Gamma_5^*=\left[\begin{matrix}-1.0002  & -0.2515\end{matrix}\right]\\
&\Pi_6^*=\left[\begin{matrix}0.5004&0.0001\\-0.0004 & 0.5004\\-0.4996&-0.0004\end{matrix}\right],\;\Gamma_6^*=\left[\begin{matrix}-0.4991 & -0.1006\end{matrix}\right].\end{split}	
\end{equation*}

Then, by solving the SDP in Theorem \ref{thm:k}, we obtained the stabilizing controller gain $K_i$ for each follower $i$ in \eqref{controller}. The specific values of $K_i$ are as follows:
\begin{equation*}
\begin{split}
&K_1=-16.6608,\quad K_2=[-7.0686\;-1.8060],\\
&K_3=[-13.3323\; -10.3354],K_4=[-13.5107\; -15.1270],\\
&K_5=[-0.5139 \;-0.9054\;-0.5120],\\
&K_6=[-0.7749 \; -1.3071 \; -0.5160].
\end{split}	
\end{equation*}


\subsubsection {Comparison with the model-based approach}
The simulation of the heterogeneous MAS was carried out using the feedback control protocol \eqref{controller} and \eqref{observer}. The initial states of the leader, followers, and observers were randomly selected. The tracking errors $e_i$ for $i\in\{1,2,\ldots,N\}$ under the data-driven control (according to Theorem~\ref{thm:uub}) and model-based control (according to Lemma~\ref{lem1}) are shown in Fig.~\ref{fig:error}, respectively.
%
%

\begin{figure}[t]
	\centering
	\includegraphics[scale=0.42]{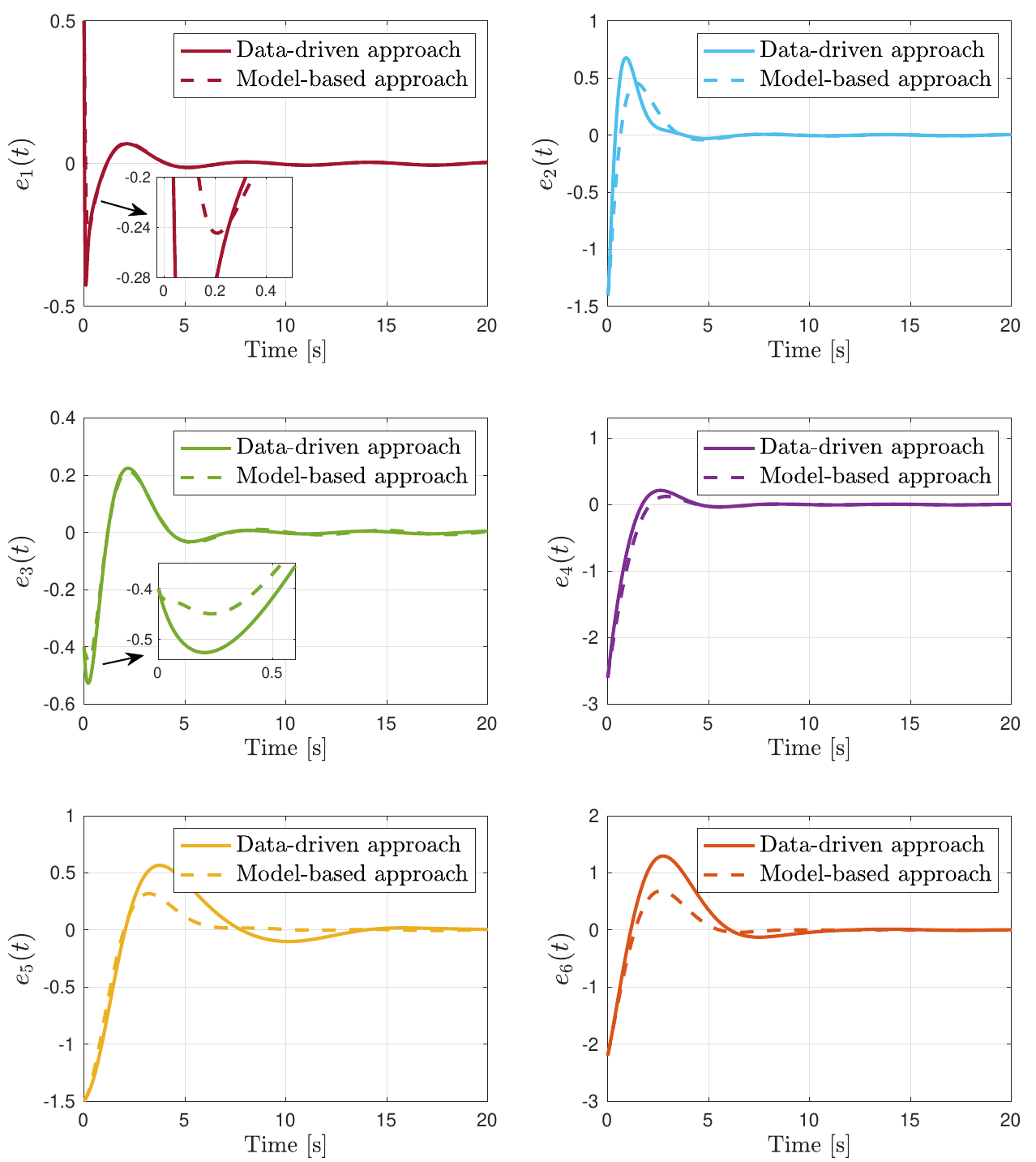}
	\caption{Tracking error of each follower under data-driven control and model-based control.}
	\label{fig:error}
\end{figure}

From this figure, it is evident that output synchronization is achieved under both control paradigms.
The results demonstrate that the proposed data-driven polytopic method achieves comparable performance to the model-based approach and exhibits excellent robustness to noisy and limited data, highlighting the effectiveness of the data-based polytopic controller. Additionally, the absence of model information during implementation further emphasizes the superiority of the data-driven method.

\subsubsection {Comparison of different noise levels}
The proposed data-driven approach was tested under different disturbance levels to investigate
their effect on system performance and tracking errors.
First, let us define the error between the solutions $(\Pi_i^s,\Gamma_i^s)$ and $(\Pi_i^*,\Gamma_i^*)$ of the
output regulation equation computed by \eqref{regulator} and \eqref{eq:opt}, respectively, as follows
\begin{equation*}
	\phi_1=\max_{i\in\{1,2,\ldots,6\}}\|\Pi_i^s-\Pi_i^*\|_2,\quad \phi_2=\max_{i\in\{1,2,\ldots,6\}}\|\Gamma_i^s-\Gamma_i^*\|_2.
\end{equation*}

Table \ref{table1} tabulates $\phi_1$ and $\phi_2$ for five different noise levels $\bar{w}_i$, $\bar{v}_i$. 
It can be observed that a larger noise level results in a lower accuracy of $(\Pi_i^*,\Gamma_i^*)$ relative to the exact solution $(\Pi_i^s,\Gamma_i^s)$.
\begin{table}[!htb]
	\centering
	\caption{The error between the regulation equation solutions computed by \eqref{regulator} and \eqref{eq:opt} under different noise levels}
	\label{table1}
	\resizebox{\linewidth}{!}{
		\begin{tabular}{cccccccc}
			\toprule 
			{Noise level}  & 0.001& 0.005&0.01&0.05& 0.1                  \\ \midrule 
			$\phi_1$    & $8.5149\times10^{-4}$  & 0.0051 &0.0110&0.0587&0.1463\\
			$\phi_2$  & $5.8310\times10^{-4}$  & 0.0040  & 0.0138 &0.0876&0.1748  \\	
			\bottomrule 
	\end{tabular}	}
\end{table}

Furthermore, taking follower $6$ as an example, Fig.~\ref{fig:bound} displays the tracking error $e_6(t)$ (the orange solid line) and the bounds of $\mathcal{P}_{e_6,t}$ (the blue dashed line) under three different noise levels. 
As expected, the tracking error $e_6(t)$ remains within $\mathcal{P}_{e_6,t}$ at each time step.
Moreover, the error bound \eqref{eq:track:poly2} is on the order (precisely, several times) of the noise size.
 This can be attributed to the fact that 
the escalating uncertainties caused by noise contribute to the expansion of noise polytopes $\mathcal{M}_{\Delta_{i1}}$ and $\mathcal{M}_{\Delta_{i2}}$. 
As a result, the polytope $\mathcal{M}_i$ of allowable system matrices becomes larger, significantly augmenting the conservatism of the obtained data-driven control solutions.


\begin{figure}[tb]
	\centering
	\subfloat[]{\includegraphics[scale=0.25]{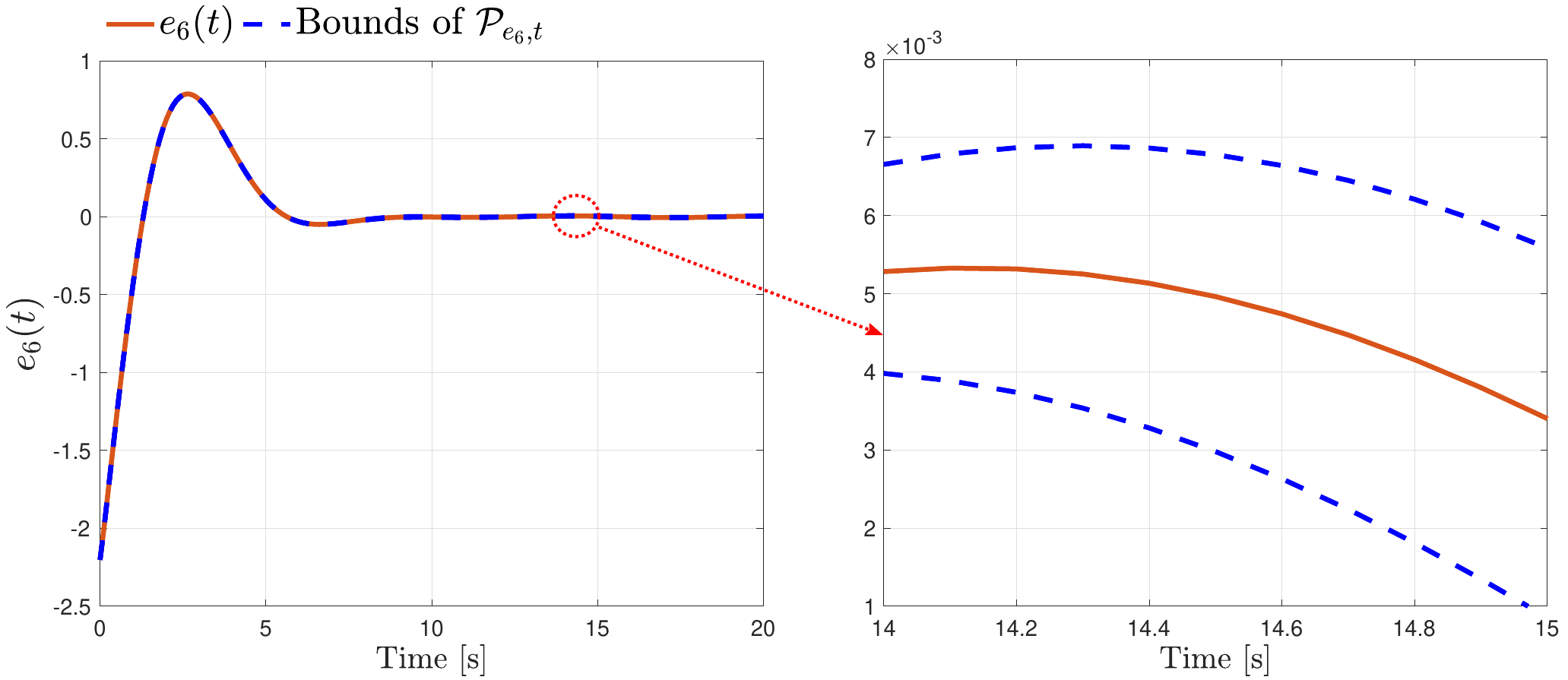}
		\label{fig4a}}\vspace{-4mm}
	\\
	\subfloat[]{\includegraphics[scale=0.25]{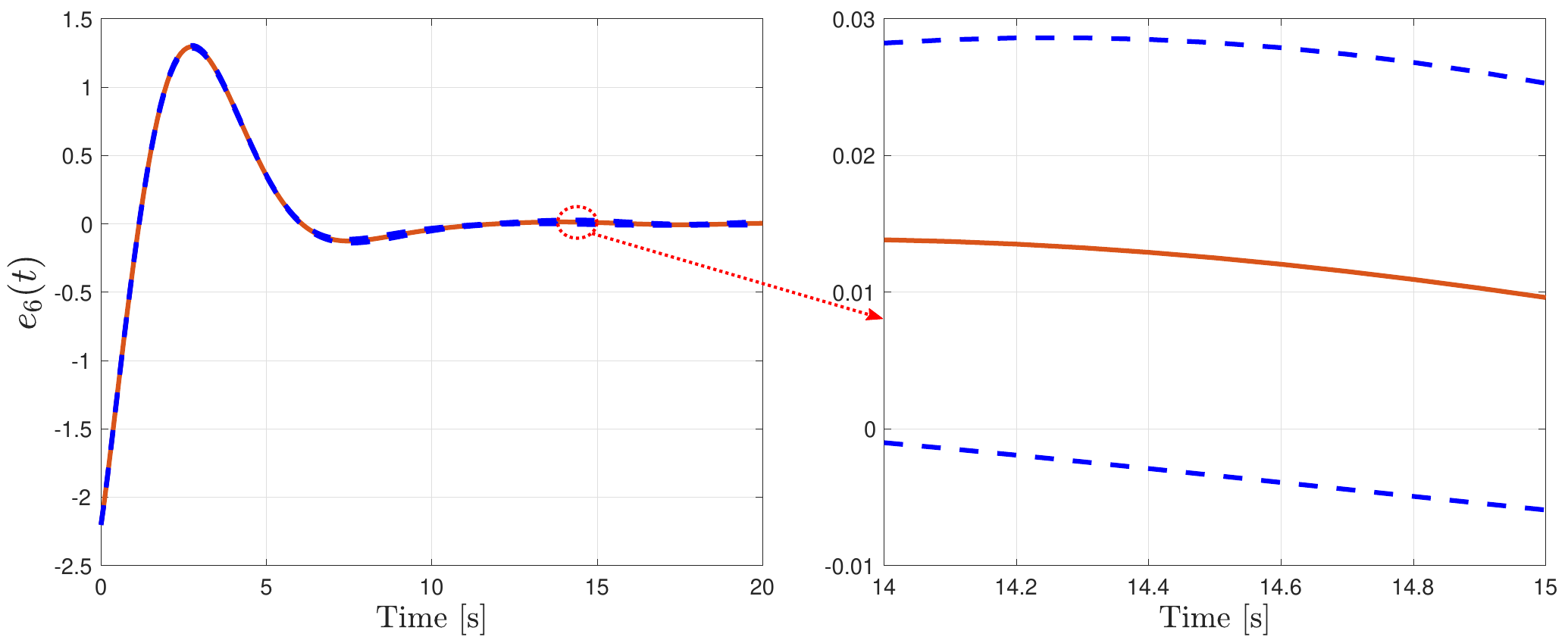}%
		\label{fig4b}}\vspace{-4mm}
	\\
	\subfloat[]{\includegraphics[scale=0.25]{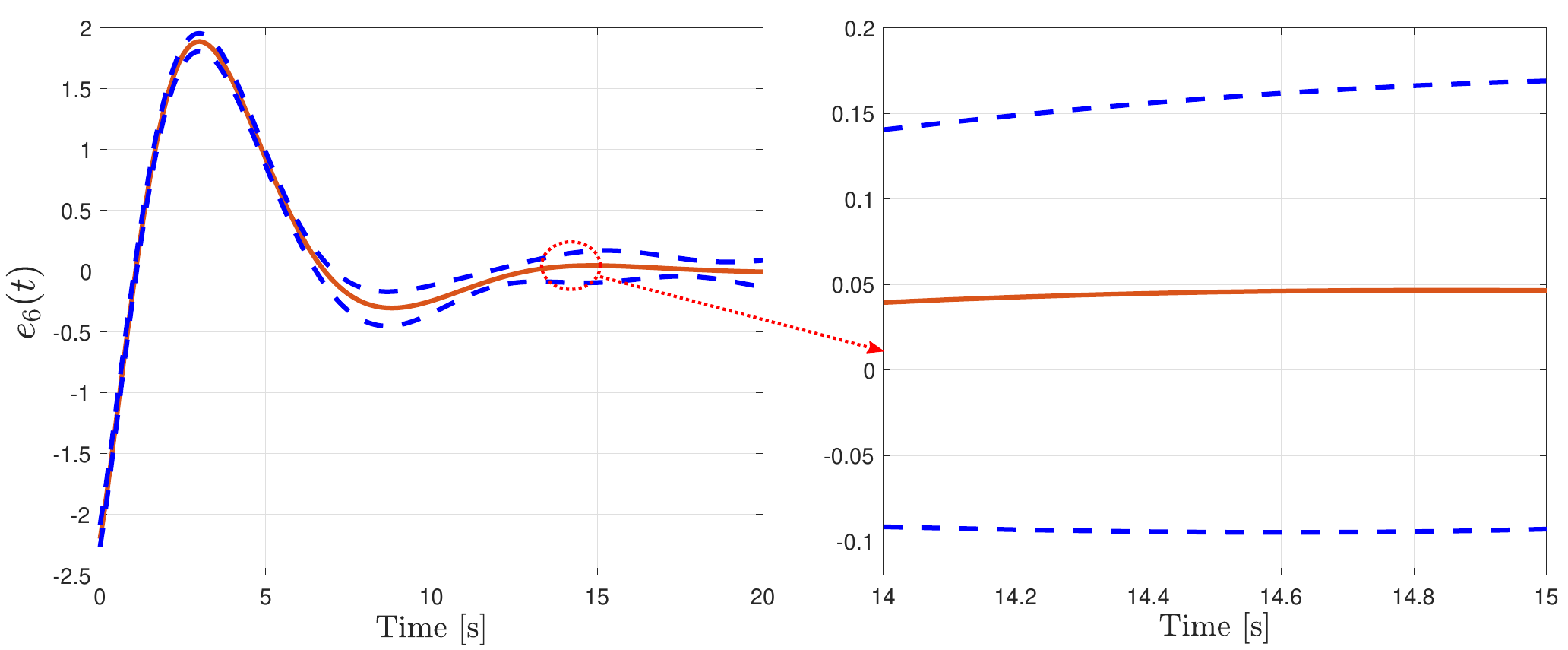}%
		\label{fig4c}}
	\caption{Tracking error $e_6(t)$ and bounds of $\mathcal{P}_{e_6,t}$ under data-driven control with different noise levels: (a) $\bar{w}_i(t)=\bar{v}_i(t)=0.001$; (b) $\bar{w}_i(t)=\bar{v}_i(t)=0.01$; (c) $\bar{w}_i(t)=\bar{v}_i(t)=0.1$.}
	\label{fig:bound}
\end{figure}

\section{Conclusions}
\label{section5}

In conclusion, this paper has presented a data-driven polytopic approach for output synchronization of unknown heterogeneous MASs, utilizing data instead of explicit knowledge of each follower's dynamics model. The proposed method offers a certified data-driven feedback control protocol that can handle perturbed offline data and uncertainties in the system matrices. By means of a unique data-based polytopic representation of the MASs, an approximate solution of the output regulator equations and a stabilizing control gain are obtained. The stability of the tracking error polytope is ensured, and sufficient data-based conditions for near-optimal output synchronization are provided. Future research directions could explore less conservative approximations of the tracking error polytope and investigate output synchronization under event-triggered control. These advancements would further enhance the performance and applicability of the data-driven approach in real-world scenarios.



\bibliographystyle{IEEEtran}
\bibliography{paper}

\begin{thebibliography}{10}
\providecommand{\url}[1]{#1}
\csname url@samestyle\endcsname
\providecommand{\newblock}{\relax}
\providecommand{\bibinfo}[2]{#2}
\providecommand{\BIBentrySTDinterwordspacing}{\spaceskip=0pt\relax}
\providecommand{\BIBentryALTinterwordstretchfactor}{4}
\providecommand{\BIBentryALTinterwordspacing}{\spaceskip=\fontdimen2\font plus
\BIBentryALTinterwordstretchfactor\fontdimen3\font minus
  \fontdimen4\font\relax}
\providecommand{\BIBforeignlanguage}[2]{{%
\expandafter\ifx\csname l@#1\endcsname\relax
\typeout{** WARNING: IEEEtran.bst: No hyphenation pattern has been}%
\typeout{** loaded for the language `#1'. Using the pattern for}%
\typeout{** the default language instead.}%
\else
\language=\csname l@#1\endcsname
\fi
#2}}
\providecommand{\BIBdecl}{\relax}
\BIBdecl

\bibitem{Olfati2007}
R.~Olfati-Saber, J.~A. Fax, and R.~M. Murray, ``Consensus and cooperation in
  networked multi-agent systems,'' \emph{Proc. IEEE}, vol.~95, no.~1, pp.
  215--233, Jan. 2007.

\bibitem{Ren2007}
W.~Ren, R.~W. Beard, and E.~M. Atkins, ``Information consensus in multivehicle
  cooperative control,'' \emph{IEEE Control Syst. Mag.}, vol.~27, no.~2, pp.
  71--82, Apr. 2007.

\bibitem{Li2013}
Z.~Li, X.~Liu, W.~Ren, and L.~Xie, ``Distributed tracking control for linear
  multiagent systems with a leader of bounded unknown input,'' \emph{IEEE
  Trans. Autom. Control}, vol.~58, no.~2, pp. 518--523, Feb. 2013.

\bibitem{cheng2023}
Z.~Cheng, X.~Zeng, H.~Fang, G.~Wang, and L.~Dou, ``Hierarchical {MPC}-based
  motion planning for automated vehicles in parallel autonomy,'' \emph{Unmanned
  Syst.}, pp. 1--12, May 2023, doi: 10.1142/S2301385024500286.

\bibitem{you2013consensus}
K.~You, Z.~Li, and L.~Xie, ``Consensus condition for linear multi-agent systems
  over randomly switching topologies,'' \emph{Automatica}, vol.~49, no.~10, pp.
  3125--3132, Oct. 2013.

\bibitem{zhou2022lyapunov}
Q.~Zhou, X.~Xu, L.~Liu, and G.~Feng, ``A {L}yapunov approach to robust
  cooperative output regulation of multiagent systems under infinite
  communication delays,'' \emph{IEEE Trans. Autom. Control}, vol.~67, no.~11,
  pp. 6271--6278, Nov. 2022.

\bibitem{Wieland2011}
P.~Wieland, R.~Sepulchre, and F.~Allg{\"o}wer, ``An internal model principle is
  necessary and sufficient for linear output synchronization,''
  \emph{Automatica}, vol.~47, no.~5, pp. 1068--1074, May, 2011.

\bibitem{Su2012}
Y.~Su and J.~Huang, ``Cooperative output regulation of linear multi-agent
  systems,'' \emph{IEEE Trans. Autom. Control}, vol.~57, no.~4, pp. 1062--1066,
  Apr. 2012.

\bibitem{Hong2013}
Y.~Hong, X.~Wang, and Z.-P. Jiang, ``Distributed output regulation of
  leader–follower multi-agent systems,'' \emph{Int. J. Robust Nonlin.
  Control}, vol.~23, no.~1, pp. 48--66, Oct. 2013.

\bibitem{li2016distributed}
Z.~Li, M.~Z. Chen, and Z.~Ding, ``Distributed adaptive controllers for
  cooperative output regulation of heterogeneous agents over directed graphs,''
  \emph{Automatica}, vol.~68, pp. 179--183, Jun. 2016.

\bibitem{Isidori2014}
A.~Isidori, L.~Marconi, and G.~Casadei, ``Robust output synchronization of a
  network of heterogeneous nonlinear agents via nonlinear regulation theory,''
  \emph{IEEE Trans. Autom. Control}, vol.~59, no.~10, pp. 2680--2691, May,
  2014.

\bibitem{Jiang2023}
Y.~Jiang, Z.~Liu, and Z.~Chen, ``Output synchronization of heterogeneous
  nonlinear multi-agent systems with input quantization: {A} universal
  performance guaranteed control scheme,'' \emph{IEEE Trans. Control Netw.
  Syst.}, pp. 1--12, Jan. 2023, doi: 10.1109/TCNS.2023.3235428.

\bibitem{Liu2018}
T.~Liu and J.~Huang, ``Adaptive cooperative output regulation of discrete-time
  linear multi-agent systems by a distributed feedback control law,''
  \emph{IEEE Trans. Autom. Control}, vol.~63, no.~12, pp. 4383--4390, Dec.
  2018.

\bibitem{Gao2016}
W.~Gao and Z.-P. Jiang, ``Adaptive dynamic programming and adaptive optimal
  output regulation of linear systems,'' \emph{IEEE Trans. Autom. Control},
  vol.~61, no.~12, pp. 4164--4169, Dec. 2016.

\bibitem{Kiumarsi2017}
B.~Kiumarsi and F.~L. Lewis, ``Output synchronization of heterogeneous
  discrete-time systems: {A} model-free optimal approach,'' \emph{Automatica},
  vol.~84, pp. 86--94, Oct. 2017.

\bibitem{Jiang2023rl}
Y.~Jiang, W.~Gao, J.~Wu, T.~Chai, and F.~L. Lewis, ``Reinforcement learning and
  cooperative ${H}_\infty$ output regulation of linear continuous-time
  multi-agent systems,'' \emph{Automatica}, vol. 148, p. 110768, Feb. 2023.

\bibitem{Chen2023}
C.~Chen, L.~Xie, Y.~Jiang, K.~Xie, and S.~Xie, ``Robust output regulation and
  reinforcement learning-based output tracking design for unknown linear
  discrete-time systems,'' \emph{IEEE Trans. Autom. Control}, vol.~68, no.~4,
  pp. 2391--2398, May, 2023.

\bibitem{Willems2005}
J.~C. Willems, I.~Markovsky, P.~Rapisarda, and B.~L. M.~D. Moor, ``A note on
  persistency of excitation,'' \emph{Syst. Control Lett.}, vol.~54, no.~4, pp.
  325--329, Apr. 2005.

\bibitem{Hou2013}
Z.~Hou and Z.~Wang, ``From model-based control to data-driven control:
  {Survey}, classification and perspective,'' \emph{Inf. Sci.}, vol. 235, pp.
  3--35, Jun. 2013.

\bibitem{vanwaarde2023informativity}
H.~J. van Waarde, J.~Eising, M.~K. Camlibel, and H.~L. Trentelman, ``The
  informativity approach to data-driven analysis and control,'' \emph{arXiv:
  2302.10488}, Feb. 2023.

\bibitem{Markovskydata}
I.~Markovsky, L.~Huang, and F.~D{\"o}rfler, ``Data-driven control based on
  behavioral approach: {From} theory to applications in power systems,''
  https://imarkovs.github.io/tutorial.pdf.

\bibitem{Persis2020}
C.~D. Persis and P.~Tesi, ``Formulas for data-driven control: {S}tabilization,
  optimality, and robustness,'' \emph{IEEE Trans. Autom. Control}, vol.~65,
  no.~3, pp. 909--924, Dec. 2020.

\bibitem{Coulson2019}
J.~Coulson, J.~Lygeros, and F.~D{\"o}rfler, ``Data-enabled predictive control:
  {I}n the shallows of the {DeePC},'' in \emph{Proc. Eur. Control Conf.},
  Naples Italy, 2019, pp. 307--312.

\bibitem{Berberich2021}
J.~Berberich, J.~K{\"o}hler, M.~A. M{\"u}ller, and F.~Allg{\"o}wer,
  ``Data-driven model predictive control with stability and robustness
  guarantees,'' \emph{IEEE Trans. Autom. Control}, vol.~66, no.~4, pp.
  1702--1717, Apr. 2021.

\bibitem{Liu2022data}
W.~Liu, J.~Sun, G.~Wang, F.~Bullo, and J.~Chen, ``Data-driven resilient
  predictive control under denial-of-service,'' \emph{IEEE Trans. Autom.
  Control}, pp. 1--16, Sept. 2022, doi: 10.1109/TAC.2022.3209399.

\bibitem{Zhao2023dataenabled}
F.~Zhao, F.~D{\"o}rfler, and K.~You, ``Data-enabled policy optimization for the
  linear quadratic regulator,'' \emph{arXiv: 2303.17958}, Mar. 2023.

\bibitem{Waarde2022}
H.~J. van Waarde, M.~K. Camlibel, and M.~Mesbahi, ``From noisy data to feedback
  controllers: Nonconservative design via a matrix {S}-lemma,'' \emph{IEEE
  Trans. Autom. Control}, vol.~67, no.~1, pp. 162--175, Jan. 2022.

\bibitem{Rueda2021data}
J.~G. Rueda-Escobedo, E.~Fridman, and J.~Schiffer, ``Data-driven control for
  linear discrete-time delay systems,'' \emph{IEEE Trans. Autom Control},
  vol.~67, no.~7, pp. 3321--3336, Jul. 2021.

\bibitem{Wang2023jas}
X.~Wang, J.~Sun, G.~Wang, F.~Allg{\" o}wer, and J.~Chen, ``Data-driven control
  of distributed event-triggered network systems,'' \emph{IEEE/CAA J. Autom.
  Sinica}, vol.~10, no.~2, pp. 351--364, Feb. 2023.

\bibitem{Qi2022}
W.~Qi, K.~Liu, R.~Wang, and X.~Sun, ``Data-driven $\mathcal {L}_{2}$-stability
  analysis for dynamic event-triggered networked control systems: {A} hybrid
  system approach,'' \emph{IEEE Trans Ind. Electron.}, pp. 1--8, Aug. 2022,
  doi: 10.1109/TIE.2022.3194657.

\bibitem{Liu2022self}
W.~Liu, J.~Sun, G.~Wang, F.~Bullo, and J.~Chen, ``Data-driven self-triggered
  control via trajectory prediction,'' \emph{IEEE Trans. Autom. Control}, pp.
  1--8, doi: 10.1109/TAC.2023.3\,244\,116, 2023.

\bibitem{Wang2022self}
X.~Wang, J.~Berberich, J.~Sun, G.~Wang, F.~Allg{\"o}wer, and J.~Chen,
  ``Model-based and data-driven control of event- and self-triggered
  discrete-time linear systems,'' \emph{IEEE Trans. Cybern.}, Mar. 2023, doi:
  10.1109/TCYB.2023.3272216.

\bibitem{Baggio2021}
G.~Baggio, D.~S. Bassett, and F.~Pasqualetti, ``Data-driven control of complex
  networks,'' \emph{Nat. Commun.}, vol.~12, no.~1, pp. 1--13, Mar. 2021.

\bibitem{Jiao2021}
J.~Jiao, H.~J. van Waarde, H.~L. Trentelman, M.~K. Camlibel, and S.~Hirche,
  ``Data-driven output synchronization of heterogeneous leader-follower
  multi-agent systems,'' in \emph{Proc. IEEE Conf. Decis. Control}, Austin, TX,
  USA, Dec. 2021, pp. 466--471.

\bibitem{li2022ddconsensus}
Y.~Li, X.~Wang, J.~Sun, G.~Wang, and J.~Chen, ``Data-driven consensus control
  of fully distributed event-triggered multi-agent systems,'' \emph{Sci. China
  Inf. Sci.}, vol.~66, no.~5, p. 152202, May, 2023.

\bibitem{Alanwar2021}
A.~Alanwar, A.~Koch, F.~Allg{\"o}wer, and K.~H. Johansson, ``Data-driven
  reachability analysis from noisy data,'' \emph{IEEE Trans. Autom. Control},
  vol.~68, no.~5, pp. 3054--3069, May, 2023.

\bibitem{Alanwar2022}
A.~Alanwar, Y.~St{\"u}rz, and K.~H. Johansson, ``Robust data-driven predictive
  control using reachability analysis,'' \emph{Eur. J. Control}, vol.~68, p.
  100666, Nov. 2022.

\bibitem{russo2022tube}
A.~Russo and A.~Proutiere, ``Tube-based zonotopic data-driven predictive
  control,'' \emph{arXiv:2209.03500}, 2022.

\bibitem{Kristian2013}
K.~Hengster-Movric, K.~You, F.~L. Lewis, and L.~Xie, ``Synchronization of
  discrete-time multi-agent systems on graphs using {R}iccati design,''
  \emph{Automatica}, vol.~49, no.~2, pp. 414--423, Feb. 2013.

\end{thebibliography}

\end{document}